\theoremstyle{plain}
\newtheorem{theorem}{Theorem}
\newtheorem{corollary}[theorem]{Corollary}
\newtheorem{lemma}{Lemma}
\newtheorem{proposition}{Proposition}
\newtheorem*{definition}{Definition}
\theoremstyle{remark}
\def\pmb#1{\setbox0=\hbox{$#1$}%
  \kern-.025em\copy0\kern-\wd0
  \kern.05em\copy0\kern-\wd0
  \kern-.025em\raise.0433em\box0}
\def\pmbs#1{\setbox0=\hbox{$\scriptstyle #1$}%
  \kern-.0175em\copy0\kern-\wd0
  \kern.035em\copy0\kern-\wd0
  \kern-.0175em\raise.0303em\box0}
\begin{document}

\title{\bf Spherically symmetric steady states of John elastic bodies in general relativity}

\author{H\aa kan Andr\'easson\footnote{E-Mail: hand@chalmers.se}\quad \& Simone Calogero\footnote{E-Mail: calogero@chalmers.se} \\[0.2cm]
Mathematical Sciences, University of Gothenburg \\ \&
Chalmers University of Technology \\
41296 G\"oteborg, Sweden
}
\date{}
\maketitle
\begin{abstract}
We study some properties of static spherically symmetric elastic bodies in general relativity using both analytical and numerical tools. The materials considered belong to the class of John elastic materials and reduce to perfect fluids when the rigidity parameter is set to zero. We find numerical support that such elastic bodies exist with different possible shapes (balls, single shells and multiple shells) and that their gravitational redshift can be very large ($z\approx 2.8$) without violating the dominant energy condition. Moreover we show that the elastic body has finite radius even in the case when the constitutive equation of the elastic material is a perturbation of a polytropic fluid without finite radius, thereby concluding that such fluids are structurally unstable within the larger class of elastic matter models under study.
\end{abstract}
\section{Introduction}
Some astrophysical systems, such as  slowly rotating stars, are well approximated by static spherically symmetric configurations of matter. The physical properties of such systems, e.g., their mass, radius and redshift, are carefully studied in the mathematical and physical literature. The results obtained depend on the choice of the matter model being used. Popular choices are the perfect fluid model, which applies to stars~\cite{kipp, ngu1, ngu2}, and the Vlasov model of collisionless particles, which applies to galaxies~\cite{gd}.
When a perfect fluid model is used, static matter distributions are necessarily spherically symmetric~\cite{licht,lindb}. For Vlasov particles this is not longer the case but most studies on Vlasov matter are none the less concerned with spherically symmetric models (see \cite{AKR1,AKR2, Rein} for studies on axially symmetric configurations of Vlasov matter).

The purpose of this paper is to investigate some physical properties of static spherically symmetric elastic bodies in general relativity and compare them with the analogous properties of perfect fluids and Vlasov particles. An important difference between elastic matter and other matter models is that elastic bodies possess generically a solid surface, i.e., a boundary with positive matter density. As a consequence elastic bodies behave quite differently from perfect fluids even when the constitutive equation of the elastic material is an arbitrary small perturbation of the perfect fluid model. This sort of ``structural instability" of perfect fluids was already found in the context of spatially homogeneous models~\cite{CH}. Earlier studies on static spherically symmetric elastic bodies in general relativity can be found in~\cite{BCV,KS,M2,M1, park}  (see~\cite{ABS1,ABS2, ABS3, BS2} for results on weakly deformed bodies without spherical symmetry; in the Newtonian case, static configurations without symmetry assumptions and for arbitrary deformations have been constructed in~\cite{TC}).

In the rest of the Introduction we first present the general problem of static spherically symmetric matter distributions in general relativity and then we specialize to the case of elastic bodies. A more detailed discussion on the general relativistic theory of elasticity is provided in Appendix. For more background on the subject we refer to~\cite{BS, CQ, KM}.

Let $(t,r,\theta,\phi)$ be a system of Schwarzschild coordinates in a static spherically symmetric spacetime. Consider a matter distribution with energy density $\rho(r)$, radial pressure $p_\mathrm{rad}(r)$ and tangential pressure $p_\mathrm{tan}(r)$. The metric of spacetime is then given by
\[
ds^2=-e^{2\beta(r)}\,dt^2+e^{2\lambda(r)}\,dr^2+r^2(d\theta^2+\sin^2\theta d\phi^2),
\]
where, assuming regularity at the center and asymptotic flatness at infinity,
\begin{equation}\label{metricST}
e^{-2\lambda(r)}=1-\frac{2m(r)}{r}, \quad \beta(r)=-\int_r^\infty e^{2\lambda(s)}\left(\frac{m(s)}{s^2}+\frac{s}{2}\, p_\mathrm{rad}(s)\right)ds,
\end{equation}
with
\[
 m(r)=\frac{1}{2}\int_0^rs^2\rho(s)\,ds
\]
denoting the Hawking mass.
We use physical units such that $8\pi G=c=1$.
The radial pressure satisfies the TOV equation
\begin{equation}
p_\mathrm{rad}'=-(p_\mathrm{rad}+\rho)\frac{m+r^3p_\mathrm{rad}/2}{r(r-2m)}+\frac{2}{r}(p_\mathrm{tan}-p_\mathrm{rad})\label{TOVeq},
\end{equation}
where a prime denote differentiation in the radial variable.
The matter distribution is said to have a {\it boundary} at $r=R$ if $p_\mathrm{rad}(R)=0$, which is called {\it solid} if $\rho(R)>0$. 

To obtain a complete system on the matter fields we need to add additional equations on the energy density and the tangential pressure. For a barotropic fluid one postulates the existence of a function $g :[0,\infty)\to [0,\infty)$ such that $g'\geq 0$, $g(0)=0$ and $p_\mathrm{tan}=p_\mathrm{rad}=p=g(\rho)$. For Vlasov particles the matter fields are given by suitable integrals in the momentum variable of the particle density $f$ in phase space and the latter is subject to the Vlasov equation~\cite{H}.

For elastic matter a complete system on the matter fields can be obtained by assigning the energy density and the radial/tangential pressure as functions of the principal stretches of the matter, which are the eigenvalues of the deformation gradient, see~\cite{KS} and the Appendix below. Namely, letting $x(r)$ be the radial stretch and $y(r)$ the tangential stretch, and given three functions
\[
\hat{\rho},\ \hat{p}_\mathrm{rad},\ \hat{p}_\mathrm{tan}:[0,\infty)\times [0,\infty)\to [0,\infty),
\]
we set
\begin{equation}\label{eqstateh1h2}
\rho(r)=\hat{\rho}(x(r),y(r)),\quad p_\mathrm{rad}(r)= \hat{p}_\mathrm{rad}(x(r),y(r)),\quad  p_\mathrm{tan}(r)= \hat{p}_\mathrm{tan}(x(r),y(r)).
\end{equation}
The principal stretches $x$, $y$ are subject to the equations
\begin{subequations}\label{h1h2m}
\begin{equation}\label{eqh1}
x'=\frac{1}{\partial_x\hat{p}_\mathrm{rad}}\Big[-(\hat{p}_\mathrm{rad}+\hat{\rho})\frac{m+r^3\hat{p}_\mathrm{rad}/2}{r(r-2m)}+\frac{2}{r}(\hat{p}_\mathrm{tan}-\hat{p}_\mathrm{rad})-y'\partial_y\hat{p}_\mathrm{rad} \Big](x,y)\end{equation}
\begin{equation}\label{eqh2}
y'=\frac{2}{r}\left(\sqrt{\frac{r}{r-2m}\,xy}-y\right),
\end{equation}
while the Hawking mass satisfies
\begin{equation}\label{meq}
m'=\frac{1}{2}r^2\hat{\rho}(x,y).
\end{equation}
\end{subequations}
The form of the functions $\hat{\rho}, \hat{p}_\mathrm{rad}, \hat{p}_\mathrm{tan}$ is determined by the {\it constitutive equation} of the elastic body. In this paper we consider the class of John elastic materials introduced in~\cite{T}, for which we have
\begin{subequations}\label{consteq}
\begin{equation}\label{rho}
\hat{\rho}(x,y)=\hat{\rho}_0(n)\left[1+k\left(\frac{x+2y}{n^{2/3}}-3\right)\right],
\end{equation}
\begin{equation}\label{prad}
\hat{p}_\mathrm{rad}(x,y)=\hat{p}_0(n)\left[1+k\left(\frac{x+2y}{n^{2/3}}-3\right)\right]+\hat{\rho}_0(n)\frac{4k(x-y)}{3n^{2/3}},
\end{equation}
\begin{equation}\label{ptan}
\hat{p}_\mathrm{tan}(x,y)=\hat{p}_0(n)\left[1+k\left(\frac{x+2y}{n^{2/3}}-3\right)\right]-\hat{\rho}_0(n)\frac{2k(x-y)}{3n^{2/3}},
\end{equation}
where
\begin{equation}\label{n}
n=y\sqrt{x},\quad \hat{p}_0(n)=n^2\frac{d}{dn}\left(\frac{\hat{\rho}_0(n)}{n}\right),
\end{equation}
\end{subequations}
see the Appendix. The constitutive equations~\eqref{consteq} contain two free functions, $\hat{\rho}_0$ and $\hat{p}_0$, and a constant $k\geq 0$, which is called {\it rigidity parameter} (or shear parameter). By letting $k\to 0$ the elastic body becomes a perfect fluid with energy density and pressure given by
\[
\rho_0(r)=\hat{\rho}_0(n(r)),\quad p_0(r)=\hat{p}_0(n(r)),
\]
which we call the {\it underlying fluid}. Our results apply in particular when the underlying fluid has a linear or polytropic equation of state. Jonh materials belong to the class of perfect {\it quasi Hookean } materials introduced by Carter and Quintana to study the deformations of neutron stars crusts~\cite{CQ}. Unfortunately a throughout discussion of which elastic materials may be relevant for the applications in neutron stars physics seems to be missing in the literature. We choose John elastic materials as a case study to exemplify the rich diversity of properties possessed by elastic bodies in general relativity compared to other matter models, such as perfect fluids.   

The paper continues as follows. In Section~\ref{properties} we study the question of existence of a solid surface for elastic balls. In particular, a characterization of elastic balls with solid surface  is given when the underlying fluid has a linear equation of state. Another important issue discussed in Section~\ref{properties} is the validity of the dominant energy condition. In Section~\ref{numerics} we present numerical simulations indicating the existence of elastic matter distributions with several different shapes (balls, single and multiple shells) and compare our results with those obtained in~\cite{AR} for Vlasov matter distributions.

\section{Properties of regular ball solutions}\label{properties}
The first result proved in this section (Proposition~\ref{y>x})  applies to all constitutive equations for the elastic material verifying a number of inequalities. Among these we assume that $\hat{\rho}, \hat{p}_\mathrm{rad}, \hat{p}_\mathrm{tan}$ satisfy the following conditions:
\begin{itemize}
\item[(C1)] $\hat{\rho}, \hat{p}_\mathrm{rad}, \hat{p}_\mathrm{tan}\in C^1([0,\infty)\times[0,\infty))$;
\item[(C2)] $\hat{\rho}(x,y)>0$, for all $x,y>0$ and $\hat{\rho}(0,y)=\hat{\rho}(x,0)=0$, for all $x,y\geq 0$;
\item[(C3)] $\hat{p}_\mathrm{rad}(0,y)=\hat{p}_\mathrm{rad}(x,0)=\hat{p}_\mathrm{tan}(0,y)=\hat{p}_\mathrm{tan}(x,0)=0$, for all $x,y\geq 0$.
\end{itemize}
Additional conditions will be required when they are needed.
%
\begin{definition}
A triple $(x,y,m)$ of radial functions is called a regular ball solution of the system~\eqref{h1h2m} with radius $R>0$ and ADM mass $M>0$ if the following holds:
\begin{itemize}
\item[(i)] $x,y,m\in C^1([0,R])$;
\item[(ii)] $x(0)=y(0):=h>0$, $m(0)=0$;
\item[(iii)] $x,y>0$, for $x\in [0,R)$, and $0<2m(r)<r$, for all $r> 0$;
\item[(iv)] For $r>R$ we have $x=y=0$  and $m=m(R):=M$;
\item[(v)] $p_\mathrm{rad}(r):=\hat{p}_\mathrm{rad}(x(r),y(r))$ satisfies $p_\mathrm{rad}(r)>0$, $r\in [0,R)$, $p_\mathrm{rad}(R)=0$;
\item[(vi)] $(x,y,m)$ solves the system~\eqref{h1h2m}, for $r\in (0,R)$.
\end{itemize}
\end{definition}
We recall that $f\in C^1([a,b])$ means that $f\in C^1((a,b))$ and that $f,f'$ extend continuously on the boundary.
We also remark that, by (iv) and (C2)-(C3), the matter fields vanish for $r>R$ and
\[
\rho(r):=\hat{\rho}(x(r),y(r))>0,\quad \text{for $r\in[0,R)$}.
\]
In the exterior of the body  the spacetime metric is given by the Schwarzschild metric, cf.~\eqref{metricST}. Since $p_\mathrm{rad}$ and $m$ are continuous at the boundary, the usual matching conditions (continuity of the first and second fundamental form) are satisfied.
Whether $\rho$ vanishes or not at the boundary depends on the behavior of the principal stretches at $r=R$.
At the boundary of a regular ball solution, the principal stretches satisfy $(x(R),y(R))=(x_*,y_*)$, where
\begin{equation}\label{boundaryeq}
\hat{p}_\mathrm{rad}(x_*,y_*)=0.
\end{equation}
By (C3), the latter equation is always satisfied when either of the principal stretches vanishes. However only the radial stretch can vanish at the boundary. In fact, by~\eqref{eqh2} we have $y'\geq -2y/r$, which gives
\begin{equation}\label{ymin}
\inf_{r\in[0,R]}y(r)>0.
\end{equation}
Therefore, by (C2),
\[
\rho(R)=0 \Leftrightarrow x(R)=0.
\]
\begin{proposition}\label{y>x}
Consider a regular ball solution with radius $R>0$. Assume that, in addition to (C1)--(C3), the constitutive equations satisfy the following assumptions for all $h>0$:
\begin{itemize}
\item[(C4)] $\partial_x \hat{p}_\mathrm{rad}(h,h)>0$;
\item[(C5)] $\partial_x\hat{p}_\mathrm{rad}(h,h)+\partial_y\hat{p}_\mathrm{rad}(h,h)\geq 0$;
\item[(C6)] There exists $C>0$ such that $\hat{p}_\mathrm{tan}-\hat{p}_\mathrm{rad}\sim C(y-x)$, as $(x,y)\to (h,h)$.
\end{itemize}
Then the inequality $y>x$ holds for all $r\in (0,R)$.
\end{proposition}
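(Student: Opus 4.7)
The plan is to combine a local Taylor analysis at the regular center with a continuation argument based on a ``no-return'' property of the diagonal $y=x$. Two observations are central. First, \eqref{eqh2} integrates once to the clean identity
\[
(r\sqrt{y})'=\sqrt{\frac{rx}{r-2m}},
\]
so $(r\sqrt{y})'>\sqrt{x}$ whenever $r>0$, since $m(r)>0$ there. Second, (C6) forces $\hat{p}_\mathrm{rad}(h,h)=\hat{p}_\mathrm{tan}(h,h)$ for every $h>0$; combined with \eqref{ymin}, this means $p_\mathrm{tan}(r_0)=p_\mathrm{rad}(r_0)$ at any $r_0\in(0,R)$ where $y(r_0)=x(r_0)$.

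I would first carry out the continuation step: at any such $r_0$, one has $y'(r_0)>x'(r_0)$. Indeed, \eqref{eqh2} immediately gives
\[
y'(r_0)=\frac{2y(r_0)}{r_0}\left(\sqrt{\frac{r_0}{r_0-2m(r_0)}}-1\right)>0,
\]
while substituting $\hat{p}_\mathrm{tan}=\hat{p}_\mathrm{rad}$ into \eqref{eqh1} and rearranging yields
\[
x'(r_0)-y'(r_0)=-\frac{(\hat{\rho}+\hat{p}_\mathrm{rad})(m+r^3\hat{p}_\mathrm{rad}/2)}{r(r-2m)\,\partial_x\hat{p}_\mathrm{rad}}\bigg|_{r_0}-y'(r_0)\,\frac{\partial_x\hat{p}_\mathrm{rad}+\partial_y\hat{p}_\mathrm{rad}}{\partial_x\hat{p}_\mathrm{rad}}\bigg|_{r_0}.
\]
By (C4) the denominator is positive; the first term is strictly negative by the interior positivity of $\rho$, $p_\mathrm{rad}$, and $m$; and the second term is non-positive by (C5) together with $y'(r_0)>0$. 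Hence $y'(r_0)>x'(r_0)$.

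The base case near $r=0$ is the main obstacle: since each right-hand side in \eqref{h1h2m} is $o(1)$ at $r=0$, one checks $x'(0)=y'(0)=0$, so strict inequality at the center requires a second-order expansion. Writing $x=h+x_2r^2+o(r^2)$, $y=h+y_2r^2+o(r^2)$ and $m=\rho_0r^3/6+o(r^3)$ with $\rho_0=\hat{\rho}(h,h)$, matching both sides of the key identity at order $r^2$ gives
\[
3y_2=x_2+\frac{h\rho_0}{3}.
\]
Expanding \eqref{eqh1} to the same order --- where, by (C6), the $\frac{2}{r}(\hat{p}_\mathrm{tan}-\hat{p}_\mathrm{rad})$ term contributes $2C(y_2-x_2)$ --- produces a linear equation for $x_2$ whose solution, combined with the above relation, reduces the desired inequality $y_2>x_2$ to
\[
(\hat{\rho}+\hat{p}_\mathrm{rad})(\hat{\rho}+3\hat{p}_\mathrm{rad})\big|_{(h,h)}+2h\hat{\rho}(h,h)\bigl(\partial_x\hat{p}_\mathrm{rad}+\partial_y\hat{p}_\mathrm{rad}\bigr)(h,h)>0,
\]
which holds by (C2), (C5), and $p_\mathrm{rad}(0)>0$ (part (v) of the definition). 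Consequently $y>x$ in a right-neighborhood of $0$. Setting $r^*=\sup\{r\in(0,R]:y>x\text{ on }(0,r)\}$, if $r^*<R$ then continuity forces $y(r^*)=x(r^*)$ and $(y-x)'(r^*)\leq 0$, contradicting the continuation step. Hence $r^*=R$, proving the proposition.
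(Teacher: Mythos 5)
Your proof is correct and follows essentially the same two-step strategy as the paper: establish $\delta=y-x>0$ near the regular center, then show that $\delta'>0$ at any interior zero of $\delta$, so the curve cannot return to the diagonal; your continuation step (using $\hat{p}_\mathrm{tan}=\hat{p}_\mathrm{rad}$ on the diagonal via (C6), then (C4), (C5) and $y'(r_0)>0$) coincides with the paper's computation of $\delta'(r_*)$. The only real difference is in the base case: you posit second-order Taylor expansions $x=h+x_2r^2+o(r^2)$, $y=h+y_2r^2+o(r^2)$ and use the neat integrated identity $(r\sqrt{y})'=\sqrt{rx/(r-2m)}$ to get $3y_2=x_2+h\rho_0/3$, whereas the paper derives the asymptotic differential relation $\delta'\sim c_1r-c_2\delta/r$ with $c_1,c_2>0$ and reads off $\delta\to 0^+$ from it — a formulation that does not presuppose the existence of the coefficients $x_2,y_2$, which is not strictly guaranteed by the $C^1$ regularity in the definition of a regular ball solution (a minor point, since both arguments invoke exactly the same hypotheses and your explicit sufficient condition $(\hat{\rho}+\hat{p}_\mathrm{rad})(\hat{\rho}+3\hat{p}_\mathrm{rad})+2h\hat{\rho}\,(\partial_x\hat{p}_\mathrm{rad}+\partial_y\hat{p}_\mathrm{rad})>0$ at $(h,h)$ is the correct quantitative version of the paper's $c_1>0$).
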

\noindent\textbf{Remark 1: }The meaning of $A(z)\sim B(z)$ as $z\to a$ is that $A(z)/B(z)\to 1$ as $z\to a$. 

\noindent\textbf{Remark 2: }The hypotheses (C4)--(C6) are satisfied by (a subclass of) John materials, cf. Corollary~\ref{cor}.
\begin{proof}
Since $x(0)=y(0)=h>0$ we obtain
\[
y'\sim \frac{2h}{r}\frac{m(r)}{r}= \frac{h}{r^2}\int_0^r s^2\rho(s)\,ds,\quad \text{as $r\to 0$}.
\]
Hence
\begin{equation}\label{yprimeas}
\frac{y'(r)}{r}\sim \frac{1}{3}\rho(0) h,\quad \text{as $r\to 0$}.
\end{equation}
Furthermore,
\begin{align}
x'\sim& -a\,\frac{m(r)+r^3p_\mathrm{rad}(0)/2}{r(r-2m(r))}\nonumber\\
&+b\,\frac{y-x}{r}-y'\frac{\partial_y\hat{p}_\mathrm{rad}(h,h)}{\partial_x\hat{p}_\mathrm{rad}(h,h)},\quad\text{as $r\to 0$.}\label{xpas}
\end{align}
where $a,b>0$ are given by
\[
a=\frac{p_\mathrm{rad}(0)+\rho(0)}{\partial_x\hat{p}_\mathrm{rad} (h,h)}, \quad b=\frac{2C}{\partial_x\hat{p}_\mathrm{rad}(h,h)}
\]
In addition,
\[
\frac{m(r)+r^3p_\mathrm{rad}(0)/2}{r(r-2m(r))}\sim \frac{1}{2}(p_\mathrm{rad}(0)+\rho(0)/3)\, r.
\]
Hence, letting $\delta(r)=y(r)-x(r)$ we obtain
\begin{equation}\label{deltapas}
\delta'(r)\sim c_1 r- c_2\frac{\delta(r)}{r},\quad \text{as $r\to 0$,}
\end{equation}
where $c_1,c_2$ are positive constants.
Since $\delta(0)=0$,~\eqref{deltapas} implies
\[
\delta(r)\to 0^+,\quad \text{as $r\to 0$.}
\]
In particular $\delta$ is positive, i.e., $y>x$, for small $r$. Now suppose that there exists $r_*\in (0,R)$ such that $\delta(r_*)=0$. Since $\hat{p}_\mathrm{rad}=\hat{p}_\mathrm{tan}$ when $\delta=0$, we get
\begin{align*}
\delta'(r_*)=\ &\frac{1}{\partial_x\hat{p}_\mathrm{rad}(x(r_*),x(r_*))}\left[(p_\mathrm{rad}(r_*)+\rho(r_*))\frac{m(r_*)+r_*^3 p_\mathrm{rad}(r_*)/2}{r_*(r_*-2m(r_*))}\right]\\
&+\frac{2y(r_*)}{r_*}\left(\sqrt{\frac{r_*}{r_*-2m(r_*)}}-1\right)\left[1+\frac{\partial_y\hat{p}_\mathrm{rad}(x(r_*),y(r_*))}{\partial_x\hat{p}_\mathrm{rad}(x(r_*),y(r_*))}\right].
\end{align*}
The latter identity and (C4)-(C5) imply $\delta'(r_*)>0$, which is impossible. Thus $\delta(r)>0$ for all $r\in (0,R)$.
\end{proof}
In the rest of the paper we work with the constitutive equations~\eqref{consteq} of John elastic materials and we make the following assumptions on the equation of state of the underlying fluid:
\begin{itemize}
\item[(F1)] $\hat{\rho}_0\in C^1([0,\infty))$, $\hat{\rho}_0(n)\geq 0$ and equality holds iff $n=0$. Moreover $\hat{\rho}_0(n)/ n^{2/3}$ is bounded as $n\to 0$;
\item[(F2)]  $\hat{p}_0\in C^1([0,\infty))$, $\hat{p}_0(n)\geq 0$, $\hat{p}_0'(n)\geq 0$ and the equalities hold iff $n=0$. Moreover $\hat{p}_0(n)/ n^{2/3}$ is bounded as $n\to 0$;
\item[(F3)] The dominant energy condition holds: $\hat{p}_0(n)\leq \hat{\rho}_0(n)$, for all $n\geq 0$.
\end{itemize}
Note that when $k=0$ a pair $(x_*,y_*)$ is a solution of~\eqref{boundaryeq} if and only if $x_*$ and/or $y_*$ vanish, hence the underlying fluid does {\it not } have a solid boundary.
Examples of underlying fluids that satisfy the above hypothesis are some barotropic fluids with linear or polytropic equation of state. In the former case we have
\[
p_0(r)=(\Gamma-1)\rho_0(r),\quad 1<\Gamma\leq 2,
\]
which, by~\eqref{n}, corresponds to choose
\begin{equation}\label{linear}
\hat{\rho}_0(n)=a n^\Gamma,\quad \hat{p}_0(n)=a (\Gamma-1)n^\Gamma,
\end{equation}
where $a$ is an arbitrary positive constant. For a polytropic equation of state we have
\[
\rho_0(r)=\frac{1}{\Gamma-1}\left(\omega^{\frac{1}{\Gamma-1}}p_0(r)^\frac{1}{\Gamma}+p_0(r)\right),\quad 1<\Gamma\leq 2,\ \omega>0,
\]
which corresponds to a pair $(\hat{\rho}_0,\hat{p}_0)$ given by
\begin{equation}\label{polytropic}
\hat{\rho}_0(n)=\frac{1}{\Gamma-1}\left(\omega^{\frac{1}{\Gamma-1}}a^\frac{1}{\Gamma}n+an^\Gamma\right),\quad \hat{p}_0(n)=an^\Gamma,
\end{equation}
where again  $a$ is an arbitrary positive constant.

It is clear that John materials satisfy (C1), (C3). Moreover, letting
\[
\sigma=\frac{x+2y}{n^{2/3}}-3,
\]
so that $\hat{\rho}(x,y)=\hat{\rho}_0(n)(1+k\sigma)$,
and using that $\sigma\geq 0$ (with equality if and only if $x=y$), we conclude that John materials satisfy (C2) as well. Next we show that (C4)--(C6) in Proposition~\ref{y>x} are also satisfied. For this we need the following calculus result.

\begin{lemma}\label{calculus}
John elastic materials satisfy
\[
\hat{p}_\mathrm{tan}(x,y)-\hat{p}_\mathrm{rad}(x,y)=2k\frac{\hat{\rho}_0(n)}{n^{2/3}}(y-x),
\]
\[
\hat{p}_\mathrm{rad}(x,y)+2\hat{p}_\mathrm{tan}(x,y)=3\frac{\hat{p}_0(n)}{\hat{\rho}_0(n)}\,\hat{\rho}(x,y),
\]
\begin{align*}
\partial_x\hat{p}_\mathrm{rad}(x,y)=&\frac{4 k}{3}\frac{y}{x}\frac{\hat{\rho}_0(n)}{n^{2/3}}+\frac{1}{2x}\left(\frac{n\hat{p}_0'(n)}{\hat{p}_0(n)}\right)\hat{p}_\mathrm{rad}(x,y)\\
&+\frac{2k}{3}\frac{x-y}{x}\left[\frac{2\hat{p}_0(n)}{n^{2/3}}+\frac{\hat{\rho}_0(n)}{n^{2/3}}\left(\frac{7}{3}-\frac{n\hat{p}_0'(n)}{\hat{p}_0(n)}\right)\right],
\end{align*}
\begin{align*}
\partial_y\hat{p}_\mathrm{rad}(x,y)=&-\frac{4 k}{3}\frac{\hat{\rho}_0(n)}{n^{2/3}}+\frac{1}{y}\left(\frac{n\hat{p}_0'(n)}{\hat{p}_0(n)}\right)\hat{p}_\mathrm{rad}(x,y)\\
&+\frac{2k}{3}\frac{x-y}{y}\left[\frac{\hat{p}_0(n)}{n^{2/3}}+2\frac{\hat{\rho}_0(n)}{n^{2/3}}\left(\frac{1}{3}-\frac{n\hat{p}_0'(n)}{\hat{p}_0(n)}\right)\right].
\end{align*}
\end{lemma}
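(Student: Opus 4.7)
The four identities are pure algebraic consequences of the constitutive equations~\eqref{consteq}, together with the thermodynamic relation $\hat{p}_0(n)=n^2(d/dn)(\hat{\rho}_0(n)/n)$, which I will rewrite in the convenient form
\[
\hat{\rho}_0'(n)=\frac{\hat{p}_0(n)+\hat{\rho}_0(n)}{n}.
\]
For the first identity, subtracting~\eqref{prad} from~\eqref{ptan} makes the common factor $\hat{p}_0(n)[1+k\sigma]$ (with $\sigma=(x+2y)/n^{2/3}-3$) cancel, leaving only $-\hat{\rho}_0(n)\,\frac{2k(x-y)}{3n^{2/3}}-\hat{\rho}_0(n)\,\frac{4k(x-y)}{3n^{2/3}}=2k\,\hat{\rho}_0(n)(y-x)/n^{2/3}$. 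For the second identity, adding~\eqref{prad} and twice~\eqref{ptan} kills the deviatoric $\hat{\rho}_0$-terms and produces $3\hat{p}_0(n)[1+k\sigma]$; dividing and multiplying by $\hat{\rho}_0(n)$ and using~\eqref{rho} yields the stated form.

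For the derivative identities the proof is a direct but careful chain-rule computation. The key auxiliary derivatives, obtained from $n=y\sqrt{x}$, are
\[
\partial_x n=\frac{n}{2x},\qquad \partial_y n=\frac{n}{y},
\]
and a short calculation gives
\[
\partial_x\sigma=\frac{2(x-y)}{3x}\,n^{-2/3},\qquad \partial_y\sigma=-\frac{2(x-y)}{3y}\,n^{-2/3}.
\]
I would then apply $\partial_x$ to~\eqref{prad}, writing the result as four groups: (i) $\hat{p}_0'(n)\,\partial_x n\,[1+k\sigma]$; (ii) $\hat{p}_0(n)\,k\,\partial_x\sigma$; (iii) the derivative of the factor $4k(x-y)/(3n^{2/3})$ acting on $\hat{\rho}_0(n)$; (iv) $\hat{\rho}_0'(n)\,\partial_x n$ times $4k(x-y)/(3n^{2/3})$.

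The main (only) obstacle is bookkeeping: one must eliminate $\hat{\rho}_0'(n)$ via $\hat{\rho}_0'=(\hat{p}_0+\hat{\rho}_0)/n$ so that the final expression involves only $\hat{\rho}_0$, $\hat{p}_0$, $\hat{p}_0'$, and also re-assemble the coefficient of $[1+k\sigma]\hat{p}_0'(n)/(2x)$ into $\hat{p}_\mathrm{rad}(x,y)\,\frac{1}{2x}\frac{n\hat{p}_0'(n)}{\hat{p}_0(n)}$ by adding and subtracting $\frac{n\hat{p}_0'(n)}{2x\hat{p}_0(n)}\cdot\frac{4k(x-y)\hat{\rho}_0(n)}{3n^{2/3}}$; the subtracted piece combines with the $\hat{\rho}_0'(n)$-generated term. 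The remaining $\hat{\rho}_0(n)$ contributions have combined coefficient $\tfrac{4k}{3xn^{2/3}}+\tfrac{2k(x-y)}{3xn^{2/3}}-\tfrac{4k(x-y)}{9xn^{2/3}}$, which one splits via $\frac{4k}{3n^{2/3}}=\frac{4k(x-y)}{3xn^{2/3}}+\frac{4ky}{3xn^{2/3}}$ to isolate the first term $\frac{4k}{3}\frac{y}{x}\frac{\hat{\rho}_0(n)}{n^{2/3}}$ and produce the bracketed $\bigl[\tfrac{2\hat{p}_0}{n^{2/3}}+\tfrac{\hat{\rho}_0}{n^{2/3}}(\tfrac{7}{3}-\tfrac{n\hat{p}_0'}{\hat{p}_0})\bigr]$ factor. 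The derivation of $\partial_y\hat{p}_\mathrm{rad}$ is structurally identical, the only differences being the replacement $\partial_xn\to\partial_yn$ and the opposite sign and $1/y$ factor in $\partial_y\sigma$; recombining in the same manner yields the stated formula. No analytic subtleties arise beyond the algebraic reorganization.
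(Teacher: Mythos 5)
Your computation is correct and matches what the paper intends: the paper states Lemma~\ref{calculus} without proof, as a routine calculus exercise, and your direct chain-rule derivation (using $\partial_x n=n/(2x)$, $\partial_y n=n/y$, the formulas for $\partial_x\sigma$, $\partial_y\sigma$, and the elimination $\hat{\rho}_0'(n)=(\hat{p}_0(n)+\hat{\rho}_0(n))/n$) does reproduce all four identities, as I have checked term by term. The only blemish is a typo in your list of $\hat{\rho}_0$-coefficients, where $\tfrac{4k}{3xn^{2/3}}$ should read $\tfrac{4k}{3n^{2/3}}$ (the form you in fact use in the subsequent split), which does not affect the argument.
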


Assuming (F1), the validity of (C6) for John elastic materials follows immediately from the first identity in Lemma~\ref{calculus}.
Moreover
\[
\partial_x\hat{p}_\mathrm{rad}(x,x)=\frac{1}{x}\left[\frac{4k}{3}\hat{\rho}_0(x^{3/2})+\frac{1}{2\sqrt{x}}\,\hat{p}_0'(x^{3/2})\right],
\]
\[
\partial_y\hat{p}_\mathrm{rad}(x,x)=\frac{1}{x}\left[-\frac{4k}{3}\hat{\rho}_0(x^{3/2})+\frac{1}{\sqrt{x}}\,\hat{p}_0'(x^{3/2})\right],
\]
by which (C4) and (C5) follow as well in virtue of (F1)--(F3).
Hence Proposition~\ref{y>x} gives the following result.
\begin{corollary}\label{cor}
For John elastic materials with underlying fluid satisfying the assumptions (F1)--(F3), regular ball solutions satisfy the inequality $y(r)>x(r)$, for all $r\in (0,R)$.
\end{corollary}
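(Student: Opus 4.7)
The plan is to reduce the corollary to a direct application of Proposition~\ref{y>x}: it suffices to check that the constitutive equations~\eqref{consteq} for John elastic materials, with underlying fluid satisfying (F1)--(F3), obey hypotheses (C1)--(C6). Much of this verification is already carried out in the paragraphs preceding the statement; the task is simply to collect those observations and close the argument.

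First I would confirm (C1) and (C3) by inspection of~\eqref{rho}--\eqref{ptan}: the formulae are manifestly $C^1$ on the open quadrant $x,y>0$, and the boundedness of $\hat{\rho}_0(n)/n^{2/3}$ and $\hat{p}_0(n)/n^{2/3}$ near $n=0$, assumed in (F1)--(F2), allows continuous extension to the coordinate axes with the correct vanishing behavior. For (C2) I would write $\hat{\rho}(x,y)=\hat{\rho}_0(n)(1+k\sigma)$ and note that the AM--GM inequality applied to the triple $(x,y,y)$ gives $x+2y\geq 3\sqrt[3]{xy^2}=3n^{2/3}$, so $\sigma\geq 0$ with equality iff $x=y$; positivity of $\hat{\rho}_0$ away from $n=0$ in (F1) then yields (C2).

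Next I would appeal to Lemma~\ref{calculus}. The first identity,
\[
\hat{p}_{\mathrm{tan}}(x,y)-\hat{p}_{\mathrm{rad}}(x,y)=2k\frac{\hat{\rho}_0(n)}{n^{2/3}}(y-x),
\]
together with (F1) gives (C6) with $C=2k\hat{\rho}_0(h^{3/2})/h>0$. Specialising the formulae for $\partial_x\hat{p}_{\mathrm{rad}}$ and $\partial_y\hat{p}_{\mathrm{rad}}$ to $x=y=h$ kills every term proportional to $(x-y)$ and reproduces the two expressions displayed in the excerpt. Positivity of $\partial_x\hat{p}_{\mathrm{rad}}(h,h)$ is then immediate from (F1)--(F2), which is (C4); summing the two expressions, the shear contributions $\pm(4k/3)\hat{\rho}_0(h^{3/2})/h$ cancel exactly and leave a strictly positive multiple of $\hat{p}_0'(h^{3/2})$, yielding (C5) in the strict form.

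With (C1)--(C6) all in hand, Proposition~\ref{y>x} delivers $y(r)>x(r)$ on $(0,R)$. The only step containing any real computation is Lemma~\ref{calculus}, obtained by direct differentiation from~\eqref{rho}--\eqref{n}; everything else is elementary bookkeeping. The lone subtlety to watch is that the apparent singularities $n^{-2/3}$ at the coordinate axes are absorbed by the boundedness hypotheses on $\hat{\rho}_0/n^{2/3}$ and $\hat{p}_0/n^{2/3}$ in (F1)--(F2), which is the mechanism ensuring (C1).
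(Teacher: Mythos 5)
Your proposal is correct and follows essentially the same route as the paper: verify (C1)--(C3) directly (with the AM--GM observation $\sigma\geq 0$), obtain (C6) from the first identity of Lemma~\ref{calculus}, and read off (C4)--(C5) from the formulas for $\partial_x\hat{p}_\mathrm{rad}(x,x)$ and $\partial_y\hat{p}_\mathrm{rad}(x,x)$, then invoke Proposition~\ref{y>x}. The only cosmetic difference is that you make the constant $C$ in (C6) and the cancellation in (C5) explicit, which the paper leaves implicit.
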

We also remark that by the second identity of Lemma~\ref{calculus}, and (F1)--(F3), we obtain the estimate
\begin{equation}\label{Omega}
p_\mathrm{rad}(r)+2p_\mathrm{tan}(r)\leq \Omega\rho(r),\quad \Omega=\sup_{n\geq 0}\frac{3\hat{p}_0(n)}{\hat{\rho}_0(n)}\leq 3,
\end{equation}
and thus the result proved in~\cite{H2} gives the following bound
\begin{equation}\label{buchdahl}
\frac{2m(r)}{r}\leq \frac{(1+2\Omega)^2-1}{(1+2\Omega)^2}\leq\frac{48}{49}.
\end{equation}
Next we discuss the validity of the dominant energy condition.
\begin{proposition}\label{matterfields}
Consider a John elastic material with underlying fluid satisfying (F1)--(F3). Let $(x,y,m)$ be a regular solution of~\eqref{h1h2m} and let $\rho(r),p_\mathrm{rad}(r),p_\mathrm{tan}(r)$ be the energy density, radial pressure and tangential pressure of the elastic body defined through~\eqref{eqstateh1h2}.  Then, for all $k>0$,
\begin{itemize}
\item[(i)] The dominant energy condition is satisfied in the radial direction and
\[
0<\frac{p_\mathrm{rad}(r)}{\rho(r)}<\frac{p_0(r)}{\rho_0(r)},\quad \text{for all $r\in (0,R)$}.
\]
\item[(ii)] Let $k<1$ and $\beta=\max(k,1/3)$. If
\[
\hat{p}(n)\leq (1-\beta)\hat{\rho}(n),
\]
the dominant energy condition is satisfied in the tangential direction and
\[
0<\frac{p_\mathrm{tan}(r)}{\rho(r)}<\frac{p_0(r)}{\rho_0(r)}+\beta,\quad \text{for all $r\in (0,R)$}.
\]
\item[(iii)] If
\[
\lim_{r\to R^+}\frac{3p_0(r)}{2\rho_0(r)}>1,
\]
then the dominant energy condition in the tangential direction is violated in an interior neighborhood of the boundary, i.e., there exists $\varepsilon>0$ such that $p_\mathrm{tan}(r)>\rho(r)$, for $r\in (R-\varepsilon,R]$.
\end{itemize}
\end{proposition}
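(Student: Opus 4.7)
The proof will rest on the constitutive identities~\eqref{rho}--\eqref{ptan} together with the strict interior inequality $y>x$ supplied by Corollary~\ref{cor}. Throughout let $\sigma=(x+2y)/n^{2/3}-3$; since $n=y\sqrt{x}$, an AM--GM estimate gives $\sigma\geq 0$. Dividing~\eqref{prad} by~\eqref{rho} yields
\[
\frac{p_\mathrm{rad}}{\rho}=\frac{p_0}{\rho_0}-\frac{4k(y-x)}{3n^{2/3}(1+k\sigma)},
\]
and the correction is strictly positive on $(0,R)$ because $y>x$. Combined with (F3) this gives $p_\mathrm{rad}/\rho<p_0/\rho_0\leq 1$, proving~(i); the positivity $p_\mathrm{rad}>0$ is already part of the definition of a regular ball solution.

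For~(ii), the same division applied to~\eqref{ptan} gives
\[
\frac{p_\mathrm{tan}}{\rho}=\frac{p_0}{\rho_0}+\frac{2k(y-x)}{3n^{2/3}(1+k\sigma)},
\]
and the aim is to bound the correction by $\beta$. I would introduce the scale-invariant parameter $t=(y/x)^{1/3}\geq 1$ and, using $n^{2/3}=y^{2/3}x^{1/3}$, rewrite
\[
\frac{y-x}{n^{2/3}}=\frac{(t-1)(t^2+t+1)}{t^2},\qquad \sigma=\frac{(t-1)^2(2t+1)}{t^2},
\]
so that the correction collapses to $(2k/3)f(t)$ with $f(t)=(t^3-1)/(t^2+k(2t^3-3t^2+1))$. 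A short calculation should then show $f(t)<3\beta/(2k)$ for every $t\geq 1$. If $k\leq 1/3$, the bound $f<1/(2k)$ rearranges to $t^2(1-3k)+3k>0$, which is automatic. If $1/3<k<1$, the bound $f<3/2$ rearranges to $(1-3k)t^2(2t-3)<2+3k$: for $t\geq 3/2$ the left side is nonpositive, and for $t\in[1,3/2)$ the inequality follows from the easy estimate $\max_{[1,3/2]}t^2(3-2t)=1$, attained at $t=1$. Combining the correction bound with the hypothesis $p_0/\rho_0\leq 1-\beta$ then yields $p_\mathrm{tan}/\rho<1$.

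For~(iii), I would evaluate~\eqref{prad} at $r=R$ and use $p_\mathrm{rad}(R)=0$ to obtain the boundary identity $\hat{p}_0(n)(1+k\sigma)=\hat{\rho}_0(n)\cdot 4k(y-x)/(3n^{2/3})$ at $r=R$. Feeding this into~\eqref{ptan} and~\eqref{rho} and taking the ratio collapses the bookkeeping to $p_\mathrm{tan}(R)/\rho(R)=\tfrac{3}{2}\,p_0(R)/\rho_0(R)$; if the hypothesized limit exceeds $1$, continuity of the matter fields produces a neighborhood $(R-\varepsilon,R]$ on which $p_\mathrm{tan}>\rho$. The main obstacle is the algebraic reduction in~(ii): isolating the scale $t$, recognizing the asymptote $f(t)\to 1/(2k)$ as $t\to\infty$, and separately handling the two regimes divided by $k=1/3$, in which this asymptote crosses the threshold $3/2$ from opposite sides. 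Parts~(i) and~(iii) reduce to essentially one-line computations once the constitutive equations are written in this form.
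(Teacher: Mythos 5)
Your proposal is correct and follows essentially the same route as the paper: (i) via the sign of the shear correction to $\hat p_{\mathrm{rad}}$ when $y>x$, (ii) via the scale-invariant variable $t=(y/x)^{1/3}$ reducing the bound $\tfrac{2k}{3}(y-x)/n^{2/3}<\beta(1+k\sigma)$ to a cubic inequality (your condition $f(t)<3\beta/(2k)$ is algebraically identical to the paper's positivity of the polynomial $\mathcal{P}(z)$), and (iii) via the boundary identity $p_{\mathrm{tan}}/\rho=\tfrac32\,p_0/\rho_0$ on the set $\hat p_{\mathrm{rad}}=0$. In fact your case analysis at $k=1/3$ supplies the verification that the paper only asserts.
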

\begin{proof}
Using $y>x>0$ in the definition of $\hat{p}_\mathrm{rad}$ we obtain
\[
\hat{p}_\mathrm{rad}(x,y)< \hat{p}_0(n)\left[1+k\left(\frac{x+2y}{n^{2/3}}-3\right)\right]=\frac{\hat{p}_0(n)}{\hat{\rho}_0(n)} \hat{\rho}(x,y).
\]
To prove (ii) we use that
\begin{equation}\label{tempo}
\frac{2k}{3}\frac{y-x}{n^{2/3}}< \beta(1+k\sigma),\quad\text{for all $y>x>0$.}
\end{equation}
To establish the latter inequality, consider the polynomial
\[
\mathcal{P}(z)=2k(\beta-1/3)z^3+\beta(1-3k)z^2+k(\beta+2/3).
\]
Then, letting $z=(y/x)^{1/3}$,  the inequality~\eqref{tempo} is equivalent to $\mathcal{P}(z)> 0$, for $z> 1$, which holds in particular for $k<1$ and $\beta=\max(k,1/3)$. Using~\eqref{tempo} in the definition of $\hat{p}_\mathrm{tan}$ we find
\begin{align*}
\hat{p}_\mathrm{tan}&< \hat{p}_0(n)\left[1+k\left(\frac{x+2y}{n^{2/3}}-3\right)\right]+\beta\hat{\rho}_0(n)\left[1+k\left(\frac{x+2y}{n^{2/3}}-3\right)\right]\\
&=\left(\frac{\hat{p}_0(n)}{\hat{\rho}_0(n)}+\beta\right)\hat{\rho}(x,y).
\end{align*}
The claim (iii) follows by the identity
\[
\left(\frac{\hat{p}_\mathrm{tan}}{\hat{\rho}}\right)_{\hat{p}_\mathrm{rad}=0}=\frac{3\hat{p}_0(n)}{2\hat{\rho}_0(n)}.
\]
\end{proof}
Applying Proposition~\ref{matterfields} to the case of an underlying fluid with linear equation of state we obtain the following result.
\begin{corollary}\label{eclin}
Assume that the underlying fluid of the John elastic material has the linear equation of state $p_0=(\Gamma-1)\rho_0$, $1<\Gamma\leq 2$. Then
\begin{itemize}
\item[(i)] If $k<1$ and $1<\Gamma\leq 2-\max(k,1/3)$, the elastic body satisfies the dominant energy condition;
\item[(ii)]if $\Gamma>5/3$, the dominant energy condition in the tangential direction is violated in an interior neighborhood of the boundary.
\end{itemize}
\end{corollary}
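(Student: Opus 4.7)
The plan is to derive the Corollary as a direct specialization of Proposition \ref{matterfields} to the linear equation of state $\hat{p}_0(n)=(\Gamma-1)\hat{\rho}_0(n)$. The key observation is that with this choice the ratio $\hat{p}_0(n)/\hat{\rho}_0(n)=\Gamma-1$ is constant in $n$, so every hypothesis of Proposition \ref{matterfields} collapses to an elementary inequality involving only $\Gamma$, $k$ and $\beta:=\max(k,1/3)$.

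For part (i), I would first invoke Proposition \ref{matterfields}(i) to obtain
\[
0<\frac{p_\mathrm{rad}(r)}{\rho(r)}<\frac{p_0(r)}{\rho_0(r)}=\Gamma-1\le 1-\beta<1,
\]
which is the radial dominant energy condition. Next, the assumption $1<\Gamma\le 2-\beta$ is exactly the requirement $\hat{p}_0(n)\le(1-\beta)\hat{\rho}_0(n)$ of Proposition \ref{matterfields}(ii); together with the hypothesis $k<1$ this lets me apply that part to get
\[
\frac{p_\mathrm{tan}(r)}{\rho(r)}<\frac{p_0(r)}{\rho_0(r)}+\beta=\Gamma-1+\beta\le 1,
\]
i.e.\ the tangential dominant energy condition.

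For part (ii), the ratio $3p_0/(2\rho_0)=3(\Gamma-1)/2$ is again constant along the solution, so the limit hypothesis of Proposition \ref{matterfields}(iii) reduces to the inequality $3(\Gamma-1)/2>1$, which is equivalent to $\Gamma>5/3$. Proposition \ref{matterfields}(iii) then directly supplies an $\varepsilon>0$ such that $p_\mathrm{tan}(r)>\rho(r)$ on $(R-\varepsilon,R]$, giving the violation of the tangential dominant energy condition near the boundary.

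Since every claim reduces to a substitution into Proposition \ref{matterfields}, there is no genuine obstacle in the argument; the only point worth emphasizing is that the threshold $\Gamma\le 2-\max(k,1/3)$ appearing in (i) is precisely the value at which both the bound $p_0/\rho_0\le 1$ (controlling the radial direction) and the hypothesis $\hat{p}_0\le(1-\beta)\hat{\rho}_0$ of Proposition \ref{matterfields}(ii) (controlling the tangential direction) are simultaneously satisfied.
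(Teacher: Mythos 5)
Your proposal is correct and coincides with the paper's own (implicit) argument: the paper obtains Corollary~\ref{eclin} precisely by substituting the constant ratio $p_0/\rho_0=\Gamma-1$ into parts (i)--(iii) of Proposition~\ref{matterfields}, exactly as you do. The identification of $\Gamma\leq 2-\max(k,1/3)$ with the hypothesis $\hat{p}_0\leq(1-\beta)\hat{\rho}_0$ and of $\Gamma>5/3$ with $3p_0/(2\rho_0)>1$ is the whole content of the deduction.
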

The range where Corollary~\ref{eclin} claims the validity/violation of the dominant energy condition is depicted in Figure~\ref{energycondfig}. We leave open the question when $\max(1,2-k)<\Gamma\leq 5/3$. 
\begin{figure}[Ht!]
\begin{center}
\includegraphics[trim= 40mm 16cm 0mm 45mm, clip=true, width=1.5\textwidth]{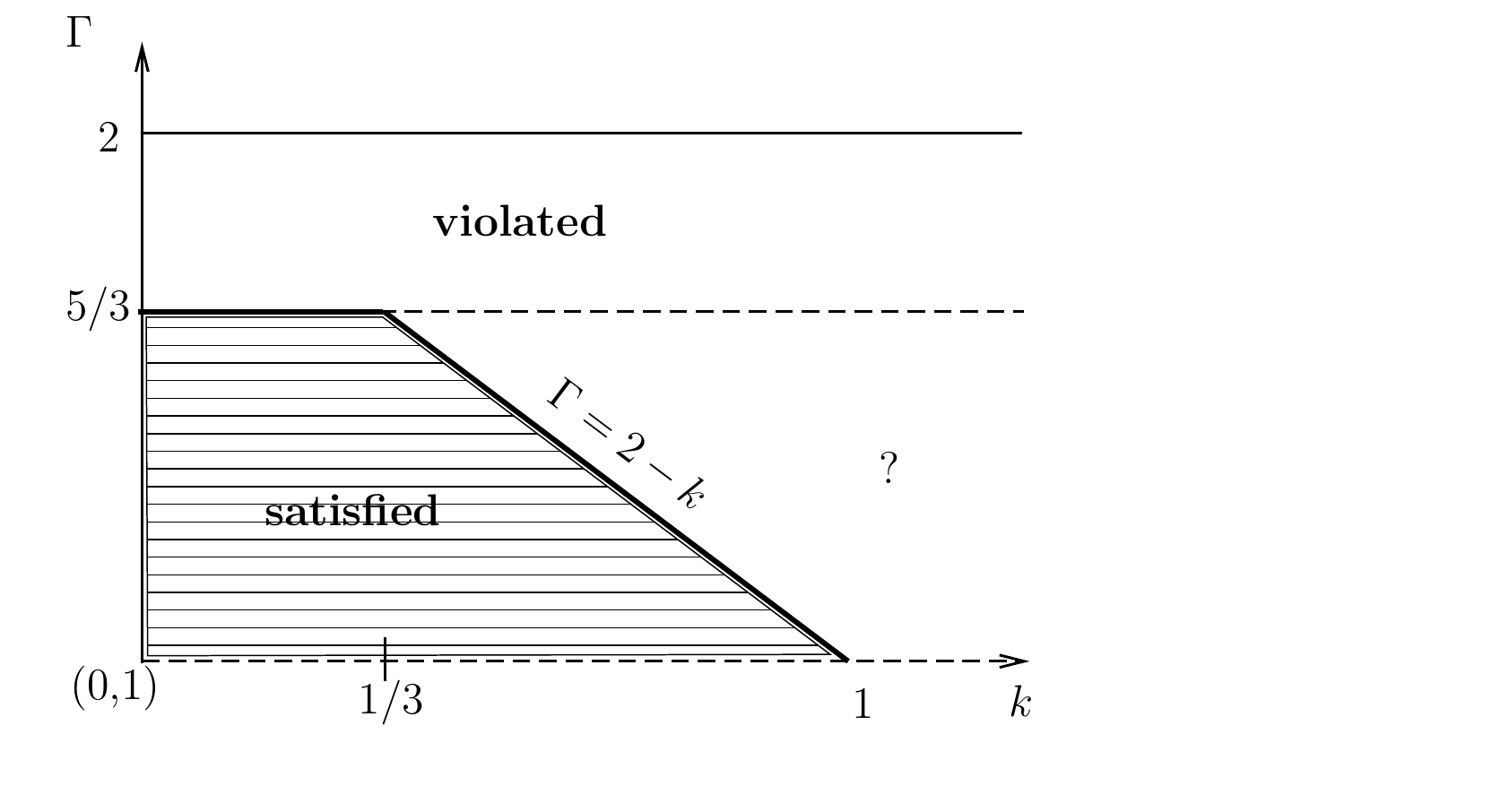}
\end{center}
\caption{The values of the parameters that entail validity/violation of the dominant energy condition. $k$ is the rigidity parameter of the elastic body and $\Gamma$ is the parameter in the linear equation of state $p_0=(\Gamma-1)\rho_0$ of the underlying fluid.}
\label{energycondfig}
\end{figure}

To conclude this section we study the behavior at the boundary of the energy density when the underlying fluid has a linear equation of state.

\begin{theorem}\label{leqs}
Assume that the underlying fluid of the John material has the linear equation of state $p_0=(\Gamma-1)\rho_0$, $1<\Gamma\leq 2$. Then a regular ball solution of~\eqref{h1h2m} satisfies $x(R)>0$, or equivalently $\rho(R)>0$, if and only if $\Gamma,k$ satisfy one of the following two conditions:
\begin{itemize}
\item[(i)] $1<\Gamma<5/3$;
\item[(ii)] $\Gamma\geq 5/3$ and $k>1/3$.
\end{itemize}
Moreover the mass and the radius of regular ball solutions with $\rho(R)>0$ satisfy the inequality
\begin{equation}\label{MRlin}
\frac{2M}{R}\geq \frac{6(\Gamma-1)}{6\Gamma-5}
\end{equation}
\end{theorem}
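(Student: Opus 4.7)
The plan is to reduce the boundary condition $p_\mathrm{rad}(R)=0$ to a scalar equation in the dimensionless ratio $u_*=y(R)/x(R)$, classify when that equation admits a solution in $(1,\infty)$, and derive the mass-radius inequality from the TOV equation by exploiting a $k$-independent identity at the boundary.

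Substituting $\hat{\rho}_0=a n^\Gamma$, $\hat{p}_0=a(\Gamma-1)n^\Gamma$ into $\hat{p}_\mathrm{rad}(x_*,y_*)=0$ and dividing through by the positive factor $a n_*^\Gamma$, the condition collapses to a relation depending only on $u_*=y_*/x_*$, namely
\[
F(u_*):=(\Gamma-1)(1-3k)+k(\Gamma+\tfrac13)u_*^{-2/3}+2k(\Gamma-\tfrac53)u_*^{1/3}=0.
\]
By Corollary~\ref{cor} one has $u_*\ge 1$, and the case $u_*=1$ is excluded since $\hat p_\mathrm{rad}(x,x)=\hat p_0(x^{3/2})>0$. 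Thus $x(R)>0$ is equivalent to $F$ admitting a root in $(1,\infty)$.

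I would then analyze $F$ on $[1,\infty)$. Noting $F(1)=\Gamma-1>0$, the coefficient of $u^{1/3}$ is $2k(\Gamma-5/3)$, which governs the three regimes. For $\Gamma<5/3$ one has $F\to-\infty$ and a root always exists, yielding case (i). For $\Gamma=5/3$, $F$ is strictly decreasing with limit $\tfrac23-2k$, so a root exists iff $k>1/3$. For $\Gamma>5/3$, $F\to+\infty$ and $F$ has a unique interior minimum at $u_c=(3\Gamma+1)/(3\Gamma-5)>1$, where a direct calculation gives
\[
F(u_c)=(\Gamma-1)(1-3k)+3k(\Gamma+\tfrac13)^{1/3}(\Gamma-\tfrac53)^{2/3}.
\]
The second summand is strictly positive, so $F(u_c)<0$ forces $1-3k<0$, which yields the necessity of $k>1/3$ in case (ii). For the converse the admissible root lies in $(1,u_c]$, where the formulas of Lemma~\ref{calculus} show $\partial_x\hat p_\mathrm{rad}\propto(\Gamma+\tfrac13)-(\Gamma-\tfrac53)u\ge 0$, so the $x$-equation in~\eqref{h1h2m} remains non-degenerate and the interior flow can be continued up to this root.

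For the mass-radius estimate I would eliminate $(x_*,y_*)$ between the boundary equation $\hat p_\mathrm{rad}(x_*,y_*)=0$ and the formulas for $\hat p_\mathrm{tan}$ and $\hat\rho$. A short manipulation using $\hat p_0=(\Gamma-1)\hat\rho_0$ yields the $k$-independent identity
\[
\frac{p_\mathrm{tan}(R)}{\rho(R)}=\frac{3(\Gamma-1)}{2}.
\]
Because $p_\mathrm{rad}>0$ on $[0,R)$ and $p_\mathrm{rad}(R)=0$, one has $p_\mathrm{rad}'(R)\le 0$; inserting $p_\mathrm{rad}(R)=0$ into~\eqref{TOVeq} gives
\[
\frac{2p_\mathrm{tan}(R)}{R}\le \rho(R)\,\frac{M}{R(R-2M)},
\]
and combining with the identity above and rearranging produces exactly $2M/R\ge 6(\Gamma-1)/(6\Gamma-5)$.

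The step I expect to be the main obstacle is the sufficiency direction in case (ii) for $\Gamma>5/3$: one must verify that the interior ODE flow actually reaches the admissible root of $F$ in $(1,u_c)$ rather than having $x$ decay to zero prematurely or $u(r)$ overshoot to $u_c$, where $\partial_x\hat p_\mathrm{rad}$ degenerates. Controlling this interplay between the singularity structure of the $x$-equation and the global dynamics of $(x,y,m)$ is the delicate step of the argument.
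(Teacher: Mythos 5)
Your argument is essentially the paper's own proof in a different parametrization: with $u=\alpha^3$ your function satisfies $F(u)=\mathcal{P}(\alpha)/\alpha^2$, where $\mathcal{P}$ is the cubic~\eqref{alphaeq} the paper uses, and your derivation of~\eqref{MRlin} from $\lim_{r\to R^-}p_\mathrm{rad}'\le 0$, the TOV equation~\eqref{TOVeq} and the $k$-independent identity $p_\mathrm{tan}(R)=\tfrac32(\Gamma-1)\rho(R)$ is verbatim the paper's; that part is complete and correct. Note also that the ``main obstacle'' you flag at the end is not an obstacle for this theorem: a regular ball solution by definition reaches $p_\mathrm{rad}(R)=0$ at finite $R$, so the curve $r\mapsto(x(r),y(r))$ does meet the zero set of $\hat p_\mathrm{rad}$; since it starts on the diagonal and stays in $\{y>x\}$ by Corollary~\ref{cor}, continuity of $y/x$ forces it to cross the line $y=u_0x$ (with $u_0$ the smallest root of $F$ in $(1,\infty)$) before it can reach $x=0$. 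The theorem explicitly does not claim existence of solutions, so no continuation argument for the ODE flow is required.

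The genuine gap is in your case $\Gamma>5/3$: you prove only that a root of $F$ in $(1,\infty)$ forces $k>1/3$, i.e.\ the ``only if'' half, and never the converse. Your own formula $F(u_c)=(\Gamma-1)(1-3k)+3k(\Gamma+\tfrac13)^{1/3}(\Gamma-\tfrac53)^{2/3}$ shows the converse cannot hold at the level of root existence: a root in $(1,\infty)$ exists iff $(\Gamma-1)(3k-1)\ge 3k(\Gamma+\tfrac13)^{1/3}(\Gamma-\tfrac53)^{2/3}$, which for fixed $\Gamma>5/3$ fails on a nontrivial interval of $k$ above $1/3$ (e.g.\ for $\Gamma=2$ it requires $k\ge(3-7^{1/3})^{-1}\approx 0.92$). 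So what you have actually established is ``$x(R)>0$ iff $F$ has a root in $(1,\infty)$'' together with ``root $\Rightarrow$ (i) or (ii)''; the implication ``(ii) $\Rightarrow$ root'' is proved only for $\Gamma=5/3$. Recovering the theorem as stated would require either sharpening condition (ii) to $F(u_c)\le 0$, or showing that no regular ball solutions exist at all when $\Gamma>5/3$, $k>1/3$ and $F(u_c)>0$ --- an existence/non-existence statement outside the scope of your argument. To be fair, the paper's proof hides exactly this step behind ``it is straightforward to show,'' so your write-up mirrors the paper's imprecision; but since you computed $F(u_c)$ explicitly, the obstruction is visible in your own text and should not be passed over.
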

\begin{proof}
When $p_0=(\Gamma-1)\rho_0$, the solutions of~\eqref{boundaryeq} lie on the straight lines $x_*=0$ and $y_*=\alpha^3 x_*$ where $\alpha$ is the positive root of
\begin{equation}\label{alphaeq}
\mathcal{P}(\alpha)=2k\left(\Gamma-\frac{5}{3}\right)\alpha^3+(1-3k)(\Gamma-1)\alpha^2+k\left(\Gamma+\frac{1}{3}\right)=0.
\end{equation}
It is straightforward to show that the latter equation has a solution $\alpha >1$ if and only if $\Gamma$, $k$ satisfy one of the conditions (i)-(ii) in the theorem.
Moreover, since $\mathcal{P}(1)=\Gamma-1>0$, it follows that  the straight line $y_*=\alpha^3x_*$ lies on the left of the line $y=x$ and on the right of the axis $x=0$, see Figure~\ref{prad=0}. Since at the center we have $x(0)=y(0)$, and, by Proposition~\ref{y>x}, $y(r)>x(r)$, for all $r\in (0,R)$, it follows that for a regular ball solution the curve $r\to (x(r),y(r))$ cannot reach the line $x=0$ without first hitting the line $y=\alpha^3x$. This concludes the proof of the first part of the theorem. To prove the inequality~\eqref{MRlin} we use that, by the TOV equation~\eqref{TOVeq},
\[
\lim_{r\to R^-}p_\mathrm{rad}'=-\rho(R)\frac{M}{R(R-2M)}+2\frac{p_\mathrm{tan}(R)}{R}.
\]
Using the identity
\[
\frac{p_\mathrm{tan}(R)}{\rho(R)}=\left(\frac{\hat{p}_\mathrm{tan}}{\hat{\rho}}\right)_{\hat{p}_\mathrm{rad}=0}=\frac{3\hat{p}_0(n)}{2\hat{\rho}_0(n)}=\frac{3}{2}(\Gamma-1)
\]
we obtain, after straightforward calculations,
\[
\lim_{r\to R^-}p_\mathrm{rad}'=-\frac{\rho(R)}{2R}\left[\frac{Z}{1-Z}-6(\Gamma-1)\right],\quad Z=\frac{2M}{R}.
\]
Since $\lim_{r\to R^-}p_\mathrm{rad}'\leq 0$, we must have $Z/(1-Z)-6(\Gamma-1)\geq0$, which gives~\eqref{MRlin}.
%
%
\end{proof}
We emphasize that the previous theorem does {\it not} claim the existence of regular ball solutions. In the next section we present numerical evidence of their existence, provided $k$ is sufficiently large (depending on $\Gamma$). We also anticipate that $x(R)>0$ holds for all solutions found numerically. We conjecture that if elastic balls with $\rho(R)=0$ exist, they are non-generic, i.e., they correspond to an isolated set of initial data $h>0$.

Notice that~\eqref{MRlin} is not in contradiction with~\eqref{buchdahl}. In fact, when evaluated at $r=R$ and $\Omega=3(\Gamma-1)$, inequality~\eqref{buchdahl} gives
\[
\frac{2M}{R}\leq \frac{(6\Gamma-5)^2-1}{(6\Gamma-5)^2}
\]
and the right hand side of the latter inequality is greater than the right hand side of~\eqref{MRlin} when $\Gamma>1$. We also remark that the right hand side of~\eqref{MRlin} can be quite close to one. For instance, assuming $\Gamma=5/3$ (which is the maximum value that entails validity of the dominant energy condition) we get $2M/R\geq 0.8$, which corresponds to a redshift 
$z=\sqrt{5}-1\approx 1.24$, where we recall that 
\[
z=\frac{1}{\sqrt{1-\frac{2M}{R}}}-1.
\]
 This means that elastic bodies satisfying the dominant energy condition can be very small and dense objects (in fact, more dense than neutron stars, for which $z\approx 0.25-0.35$). One undesirable  feature of the inequality~\eqref{MRlin} is that it is independent of the rigidity parameter $k$. As shown in Figure~\ref{Z}, the exact value of $2M/R$ grows with $k$ and can get much closer to one than the right hand side of~\eqref{MRlin} and indeed redshifts $z\approx2.8$ can be attained without violation of the dominant energy condition.
\begin{figure}[Ht]
\begin{center}
\includegraphics[width=0.8\textwidth]{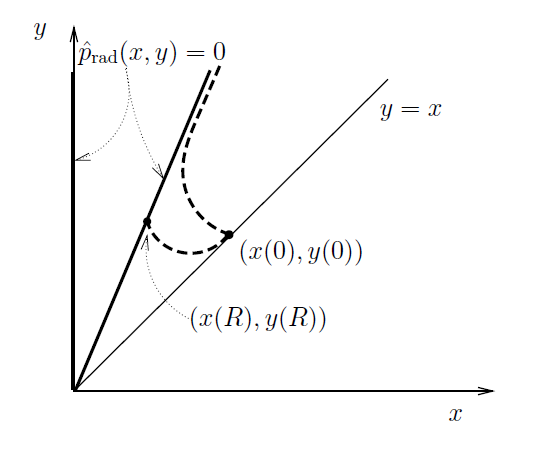}
\end{center}
\caption{This picture summarizes the argument in the proof of Theorem~\ref{leqs}. Under the assumptions on the parameters $k,\Gamma$, the curve $r\to(x(r),y(r))$ cannot intersect the axis $x=0$ without first hitting the portion of the graph $\hat{p}_\mathrm{rad}(x,y)=0$ where $x>0$. The other alternative is that the curve $r\to (x(r),y(r))$ does not intersect the graph $\hat{p}_\mathrm{rad}(x,y)=0$, in which case the body does not have finite radius.}
\label{prad=0}
\end{figure}

\section{Numerical results}\label{numerics}
The numerical simulations presented in this section have been performed using the ODE solver of MATHEMATICA 8. To obtain regular ball solutions  one has to integrate the system~\eqref{h1h2m} with initial datum at the center. Since the system is singular at $r=0$, we fix the initial datum at $r_*=10^{-9}$. The value of $(x,y,m)$ at $r=r_*$ has been obtained by Taylor expanding the solution up to second order. For instance $x(r_*)=h+x'(0)r_*+x''(0)r_*^2/2$, where we compute $x',x''$ using~\eqref{eqh1}. This procedure is not necessary in the case of shells, since these solutions are obtained by integrating numerically the system~\eqref{h1h2m} with initial datum away from the center. In the case of shells, the initial data for $x,y$ at $r=R_\mathrm{int}$ (the internal radius of the shell) are given by
\[
x(R_\mathrm{int})=x_*,\quad y(R_\mathrm{int})=y_*,
\]
where $\hat{p}_\mathrm{rad}(x_*,y_*)=0$, which has been solved numerically.  In all cases (balls, shells, etc.), the integration stops when the radial pressure vanishes.

For our numerical simulations we assume a John elastic material whose underlying fluid has either linear or polytropic equation of state. We consider the two cases in two different subsections.

\subsection{Underlying fluid with linear equation of state}
It is well known that for perfect fluids with linear equation of state, spherically symmetric static solutions with finite radius do not exist. Our numerical simulations show that by adding a sufficiently large strain to the fluid, we obtain a body with finite radius. Precisely, letting $p_0=(\Gamma-1)\rho_0$, $1<\Gamma\leq 2$, the equation of state of the underlying fluid, we have found that there exists $k_*=k_*(\Gamma)$  such that the elastic body has finite radius if $k>k_*$, see Figure~3. Moreover $k_*$ depends only on $\Gamma$ not on the initial datum $h=x(0)=y(0)$ at the center.

\vspace{1cm}

 \begin{minipage}{\textwidth}
  \begin{minipage}[b]{0.7\textwidth}
    \centering
    \includegraphics[width=1\textwidth]{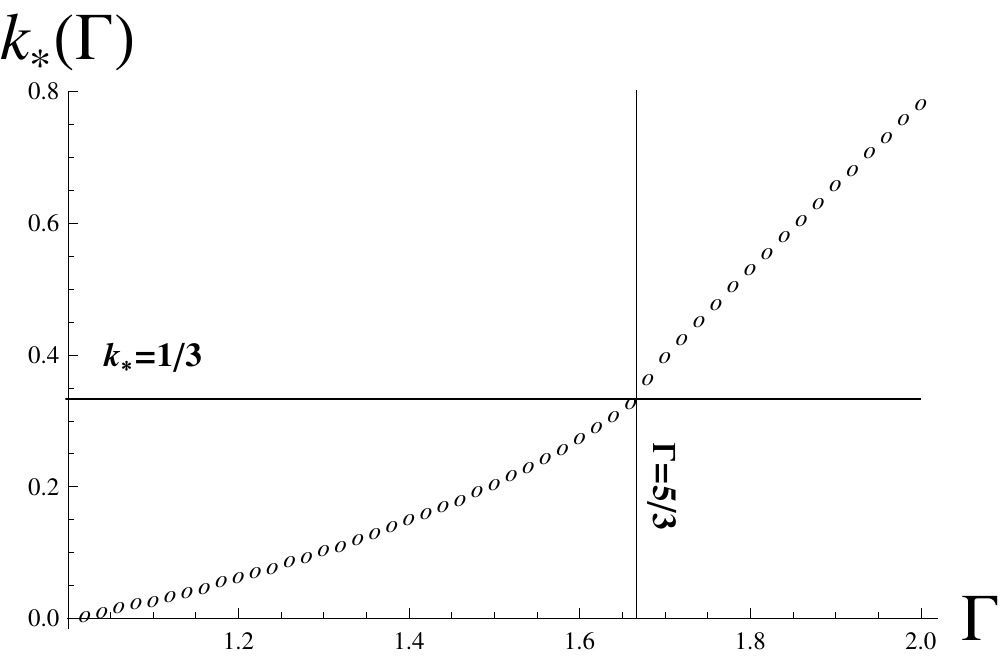}
    \captionof{figure}{Minimum value of the rigidity parameter $k$ to have an elastic body with finite radius when the underlying fluid has equation of state $p_0=(\Gamma-1)\rho_0$. Some values of $k_*(\Gamma)$ are given in the table beside. Note that $k_*(5/3)=1/3$.}
\label{plotsLEOS}
  \end{minipage}
  \hfill
  \begin{minipage}[b]{0.25\textwidth}
    \centering
    \begin{tabular}{|c|c|}\hline
& \\
      $\Gamma$ & $k_*(\Gamma)$ \\[0.1cm] \hline
&  \\
        6/5 & 0.0667 \\
        7/5 & 0.1533 \\
        3/2 & 0.2067 \\
        5/3 & 0.3333 \\
        9/5 & 0.6333 \\
        2 & 0.7833 \\ \hline
      \end{tabular}\\[1cm]
         \end{minipage}
  \end{minipage}
\vspace{1cm}

In Figure~\ref{profile} we show a typical profile of the matter variables for a regular ball solution. All the solutions that we found have a positive energy density at the boundary.

\begin{figure}[Ht!]
\begin{center}
\includegraphics[width=0.8\textwidth]{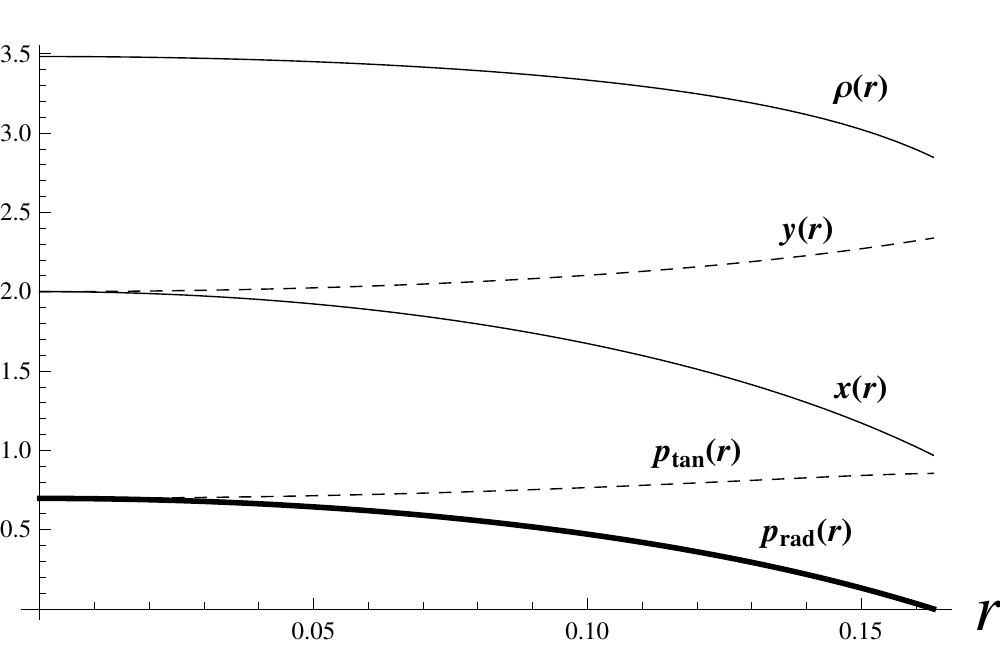}
\end{center}
\caption{The picture shows the profile of the matter fields and  of the principal stretches for an elastic body with underlying fluid with linear equation of state $p_0=(\Gamma-1)\rho_0$. We choose $\Gamma=6/5$ and $k=1/5$. Notice that $\rho(R)>0$. We have been unable to find numerical solutions with zero energy density on the boundary. }
\label{profile}
\end{figure}

In Figure~\ref{Z} we depict the value of $Z=2M/R$ for $\Gamma=5/3$ and several values of $k>k_*(5/3)=1/3$. We recall that $5/3$ is the largest value of $\Gamma$ such that, according to Corollary~\ref{eclin},  the dominant energy condition is satisfied at the boundary. In fact, we have found numerically that the dominant energy condition is satisfied everywhere in the support of the solution. Remarkably, $2M/R$ can attain rather large values, greater than the Buchdahl limit $8/9$. For large values of the rigidity parameter we have $Z\approx 0.93$, which corresponds to a redshift $z\approx2.78$. Thus elastic stars might have a very large gravitational redshift, without violating the dominant energy condition.

\begin{figure}[Ht]
\begin{center}
\includegraphics[width=0.8\textwidth]{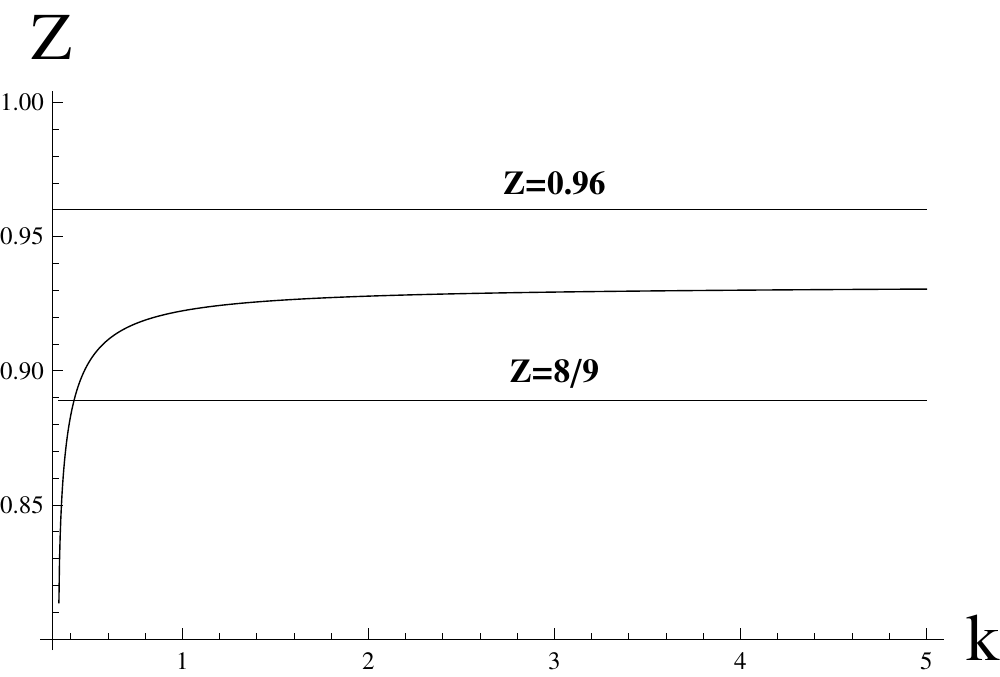}
\end{center}
\caption{This picture shows the value of $Z=2M/R$ in terms of the rigidity parameter for an underlying fluid with equation of state $p_0=2\rho_0/3$ (i.e., $\Gamma=5/3$). The profile of the curve $Z(k)$ is independent of the initial data. For $k>\bar{k}\sim 0.418$, the value of $Z$ dominates the Buchdahl limit $8/9$. The upper horizontal line corresponds to the dominating value $Z=0.96$ given by~\eqref{buchdahl}, which holds with $\Omega=2$  for the elastic material under consideration. }
\label{Z}
\end{figure}

\subsection{Underlying fluid with polytropic equation of state}
Let us first review the known results concerning the existence of regular fluid balls with poytropic equation of state~\eqref{polytropic}. The main question here is for which values of the polytropic index $q$ the solution of the TOV equation has finite radius, where
\[
q=\frac{1}{\Gamma-1}.
\]
This question has been studied extensively in the physical and mathematical literature, see for instance~\cite{HRU,Mak,RR} and references therein. The finite radius property is known to be verified for $0<q\leq 3$ for all data at the center and for $q\in (3,5)$ if $\rho(0)/(p(0)+\rho(0))$ is sufficiently small. For $q\geq 5$,  the polytropic fluid does not have finite radius.

Our numerical solutions indicate that within the class of elastic materials considered in this paper, polytropic perfect fluids without finite radius  are structurally unstable: when the polytropic index $q$ is equal to 5, the underlying fluid does not have finite radius while the elastic body does, no matter how small is the modulus of rigidity. Figure~\ref{kR} depicts (in a logarithmic scale) the radius of the elastic star as a function of the modulus of rigidity for $q=5$.

\begin{figure}[Ht]
\begin{center}
\includegraphics[width=0.8\textwidth]{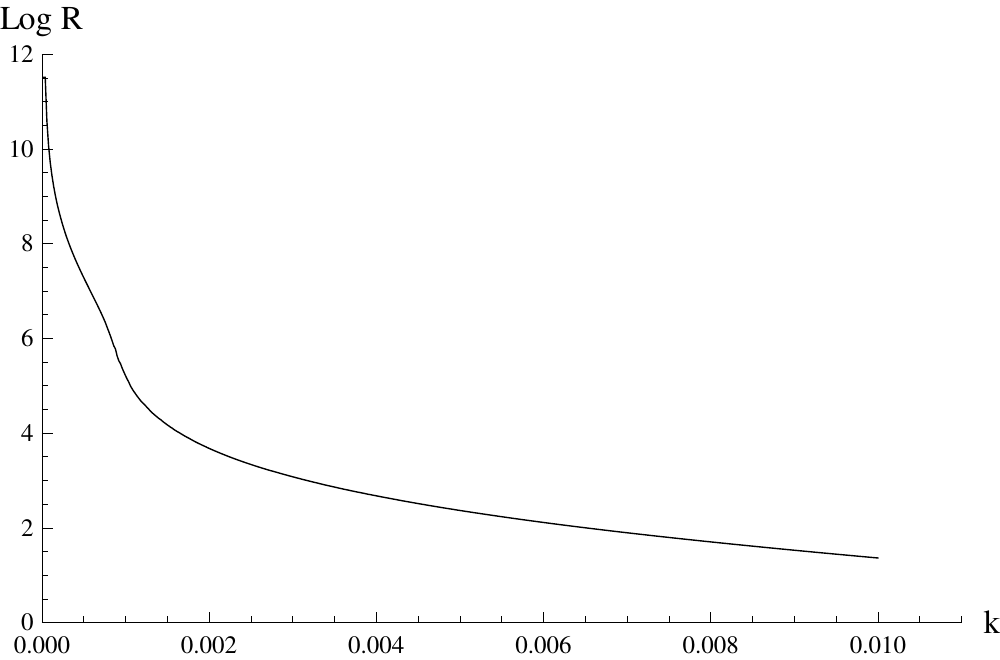}
\end{center}
\caption{The radius of the elastic body as a function of the rigidity parameter (in logarithmic scale). The poytropic index of the underlying fluid is $q=5$. }
\label{kR}
\end{figure}

In the rest of the simulations we assume that the underlying fluid has a polytropic equation of state with polytropic index $q=3/2$, and thus in particular it has finite radius. In Figures~\ref{sols1}-\ref{sols2} we depict solutions with several different shapes: balls, single shells with and without core, double shells. Note that $\rho=0$ at the boundary of the underlying fluid, while $\rho$ never vanishes at the boundary of the elastic solutions depicted in Figures~\ref{sols1}-\ref{sols2}. Moreover for the underlying fluid only ball-shaped static solutions are possible, because the pressure is a decreasing function of the radius. The many-body solutions depicted in Figures~\ref{sols1}-\ref{sols2} have been obtained by gluing together single bodies solutions through a vacuum region. There seems to be no restriction on the relative position of the shells, which is a property that distinguishes elastic matter from Vlasov matter, see the discussion in Section~\ref{vlasovsection}.

\begin{figure}[Ht!]
\subfigure{
\includegraphics[width=0.8\textwidth]{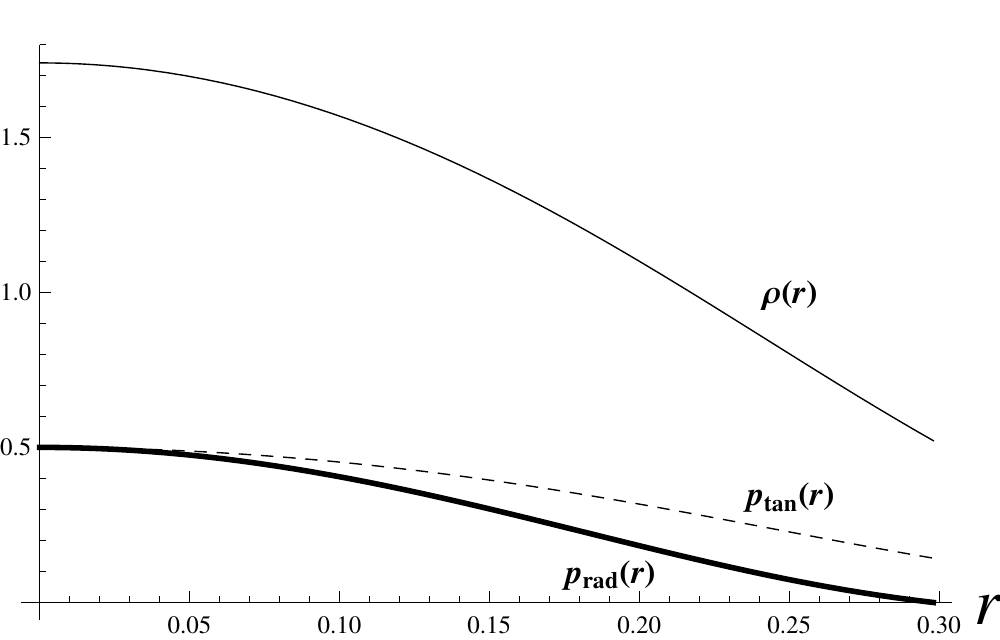}}\\
\subfigure{
\includegraphics[width=0.8\textwidth]{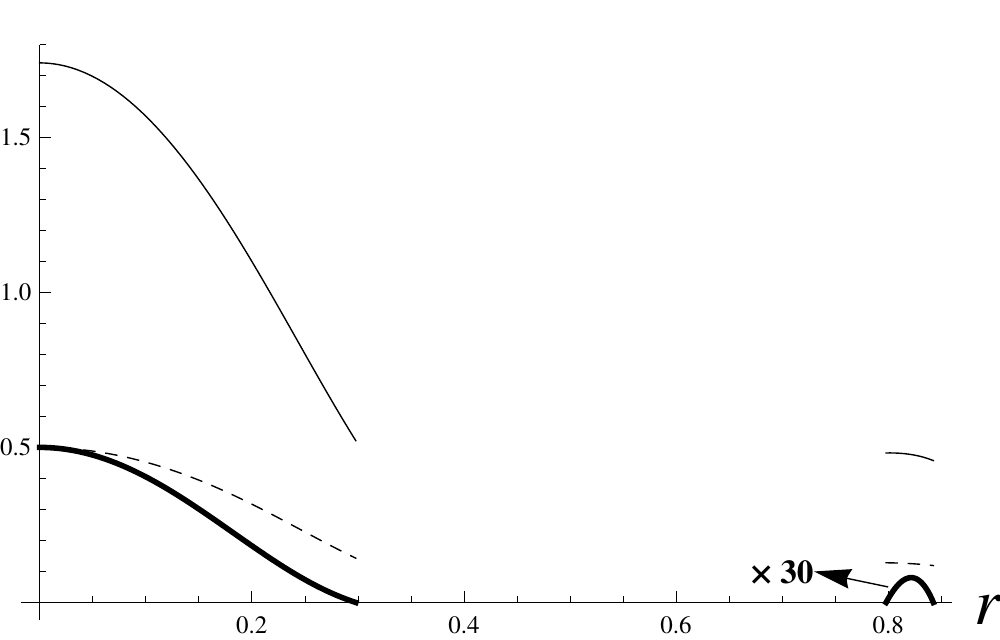}}
\caption{This picture shows an elastic ball and a ball surrounded by a shell ($k=1/10, q=3/2$). The radial pressure of the outer shell has been enhanced for graphical purposes.}
\label{sols1}
\end{figure}

\begin{figure}[Ht!]
\subfigure{
\includegraphics[width=0.8\textwidth]{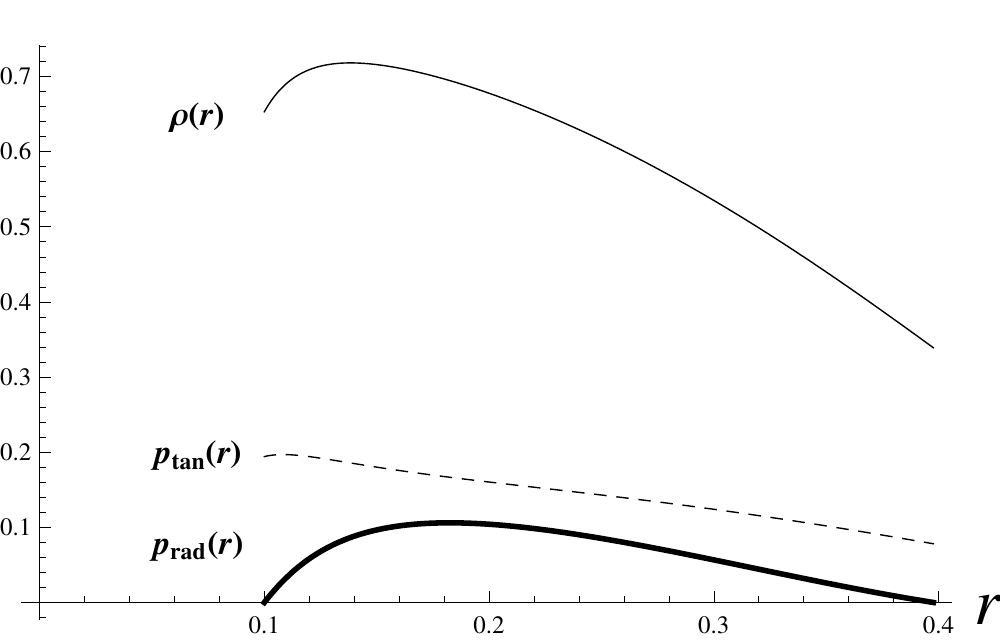}}\\
\subfigure{
\includegraphics[width=0.8\textwidth]{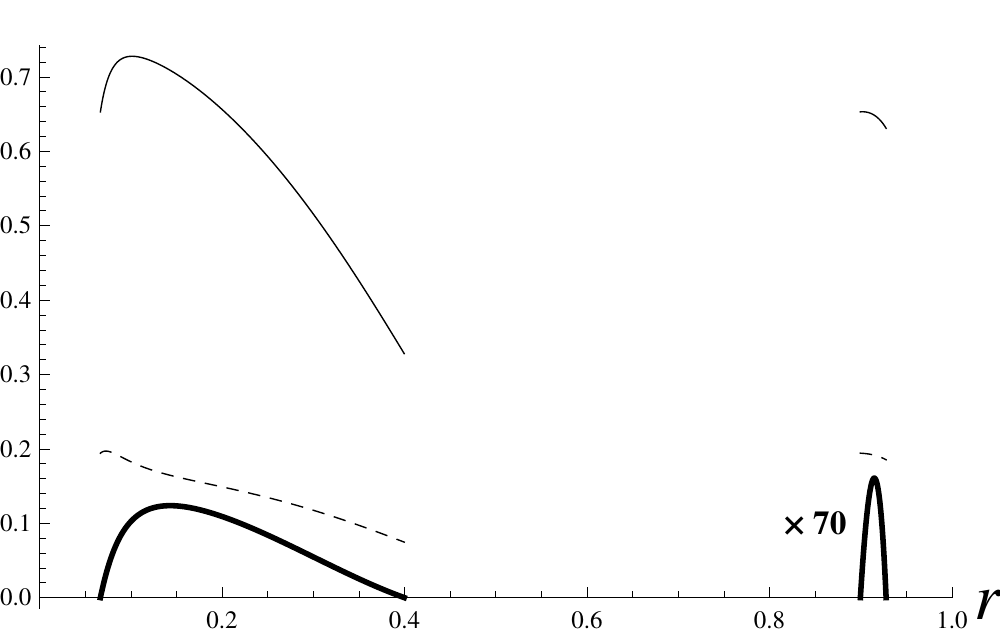}}
\caption{This picture shows a single and a double shell ($k=1/10, q=3/2$). The radial pressure of the outer shell has been enhanced for graphical purposes.}
\label{sols2}
\end{figure}

Finally in Figure~\ref{RM} we depict the mass vs radium diagram of ball solutions for different values of $k$. For $k=0$, which corresponds to a polytropic fluid, we recognize the characteristic spiral structure of the diagram. This structure is preserved for small value of the rigidity parameter. However our numerical simulations indicate that for $k\geq 1/3$ the spiral structure is no longer present.

\begin{figure}[Ht!]
\begin{center}
\includegraphics[width=1\textwidth]{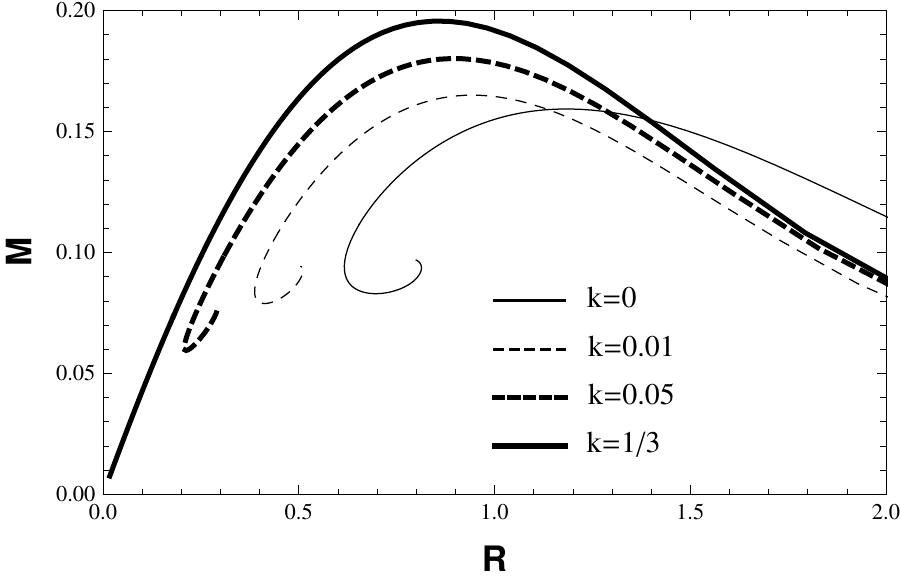}
\end{center}
\caption{The radius-mass diagram of ball solutions. The underlying fluid is polytropic with index $q=3/2$. The spiral structure, which appears for perfect fluids and Vlasov matter, is also present for elastic bodies only if the rigidity parameter is sufficiently small.}
\label{RM}
\end{figure}

\subsection{Comparison with Vlasov matter}\label{vlasovsection}
So far we underlined the differences between perfect fluids and elastic bodies, the former class of matter models arising in the limit $k\to 0$ in the constitutive equations of John materials. In this subsection we compare our results with those obtained in~\cite{AR} in the case of Vlasov matter. One common feature of elastic matter and Vlasov matter is that they are both anisotropic models, in contrast to the fluid model which is isotropic.

The fundamental quantity for describing Vlasov matter is the density function $f$ which is a non-negative function on phase space. The energy density $\rho$, the radial pressure $p_{\mathrm{rad}}$ and the tangential pressure $p_{\mathrm{tan}}$ are given in terms of $f$ by the following expressions in suitable coordinates
\begin{eqnarray}
\rho(r)&=&
\int_{\mathbb{R}^3}\sqrt{1+v^2}f(r,v)\;dv,\label{rho}\\
p_{\mathrm{rad}}(r)&=&\int_{\mathbb{R}^3}\frac{(x\cdot v)^{2}}{\sqrt{1+v^2}}f(r,v)\;dv,\label{p}\\
p_{\mathrm{tan}}(r)&=&\frac12\int_{\mathbb{R}^3}\frac{|x\times v|^2}{\sqrt{1+v^2}}f(r,v)\;
dv.\label{q}
\end{eqnarray}
Here $v\in \mathbb{R}^3$ can be thought of as the three-momentum of the particles, cf.~\cite{H} for more background. From~(\ref{p}) we conclude that $p_{\mathrm{rad}}(r)=0$ if and only if $f(r,\cdot)=0$. Hence, since $p_{\mathrm{rad}}(R)=0$ at the boundary $r=R$ of a static body, it follows that also $p_{\mathrm{tan}}$ and $\rho$ vanish at $r=R$. This is in sharp contrast to our results above for elastic matter where we found that $\rho(R)>0$ (cf. Figures~\ref{sols1} and~\ref{sols2}) for all solutions we construct numerically. This difference is perhaps intuitive since Vlasov matter models galaxies, which typically have a thin atmosphere close to the boundary, whereas elastic matter models very dense stars, which reasonably possess a solid surface.

Another distinct difference between Vlasov matter and elastic matter is the maximum possible value of the quantity $2M/R$. The expressions above imply that 
\begin{equation}\label{Vlasovcon}
p_{\mathrm{rad}}+2p_{\mathrm{tan}}\leq \rho.\nonumber
\end{equation}
Hence, Vlasov matter satisfies~(\ref{Omega}) with $\Omega=1$, which in view of~\eqref{buchdahl} implies that
\[
\frac{2m(r)}{r}\leq \frac89, \mbox{ for all }r\in [0,R].
\]
For elastic matter we found above examples where $2M/R\approx 0.93$, which corresponds to $z\approx 2.8$, hence quite larger than the Buchdahl limit $8/9$ corresponding to $z=2$.

An interesting feature of Vlasov matter is that static solutions may consist of many peaks as is shown in Figure~\ref{figvlasov1}. Note in particular that there is a vacuum region between the first and second peak (there are in fact cases where there are many such vacuum regions, cf.~\cite{AR}). These vacuum regions appear naturally in the case of Vlasov matter, i.e., the solution has not been pieced together by separate solutions, cf.~\cite{AR} for more background
 on this issue. This is in contrast to the solutions shown in Figures~\ref{sols1} and~\ref{sols2} and entails that, while for elastic matter it seems that the shells can be placed at any radius, for Vlasov matter the position of the peaks is determined by the solution.

The property that the mass-radius diagram for elastic matter contains spirals, as depicted in Figure~\ref{RM}, is also true in the case of Vlasov matter. However, the spirals are more pronounced in the latter case, i.e., the spirals consist of many turns as is shown in Figure~\ref{spiralsfig}.

\begin{figure}[Ht!]
\begin{center}
\includegraphics[width=1\textwidth]{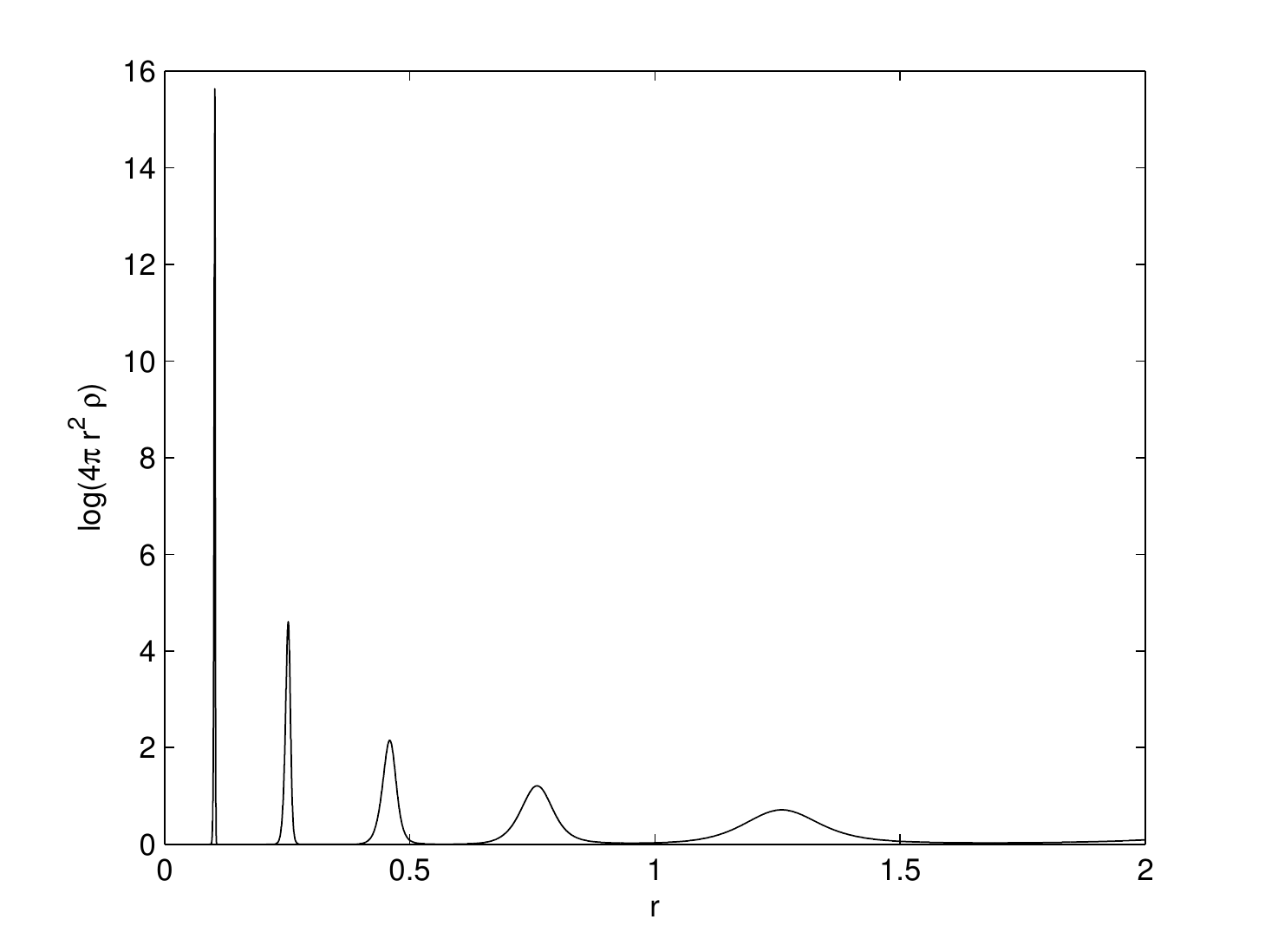}
\end{center}
\caption{Multi-peaks for Vlasov matter.}
\label{figvlasov1}
\end{figure}

\begin{figure}[Ht!]
\subfigure{\includegraphics[width=0.5\textwidth, height=5cm]{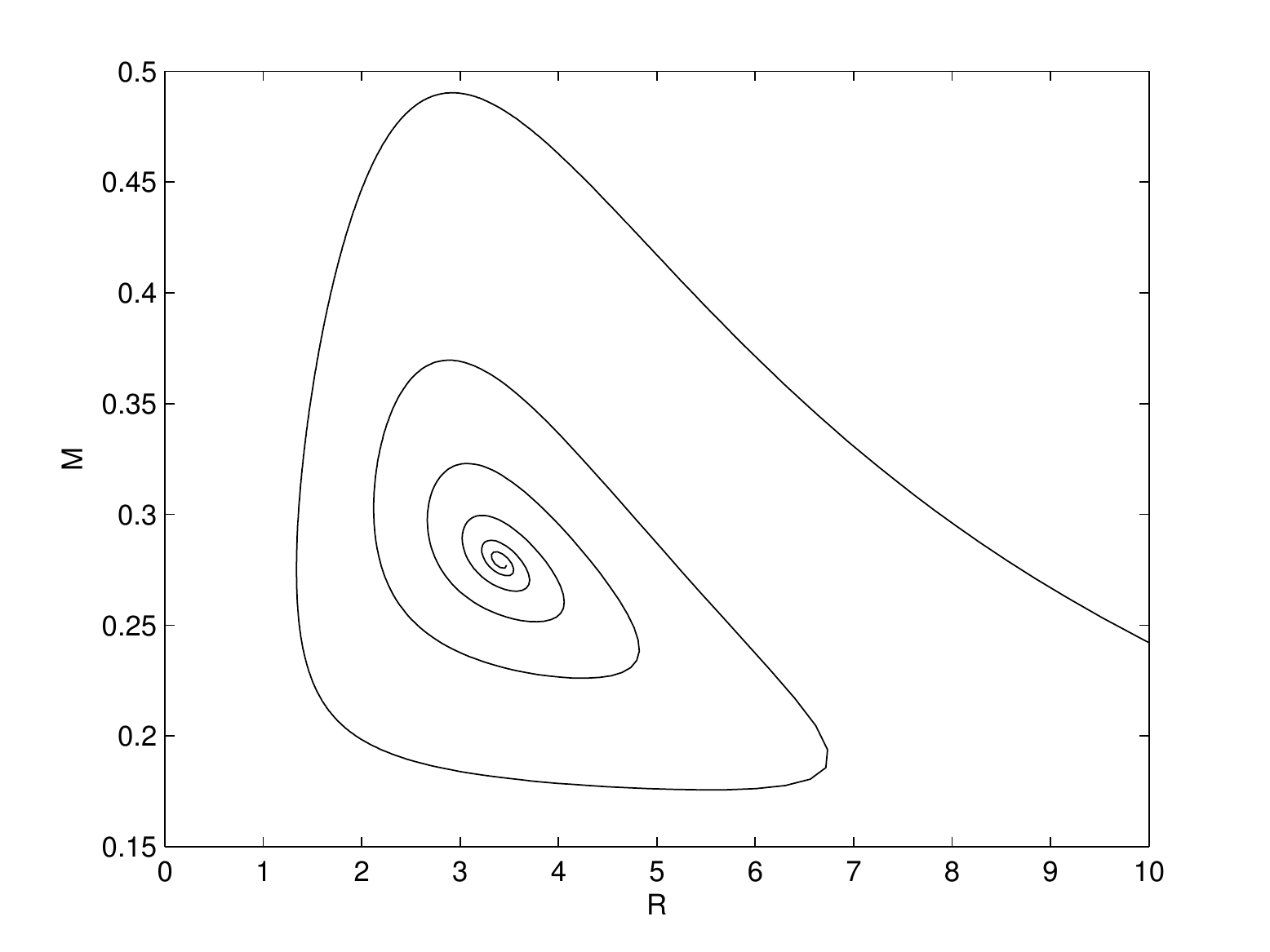}}\
\subfigure{
\includegraphics[width=0.47\textwidth, height=5cm]{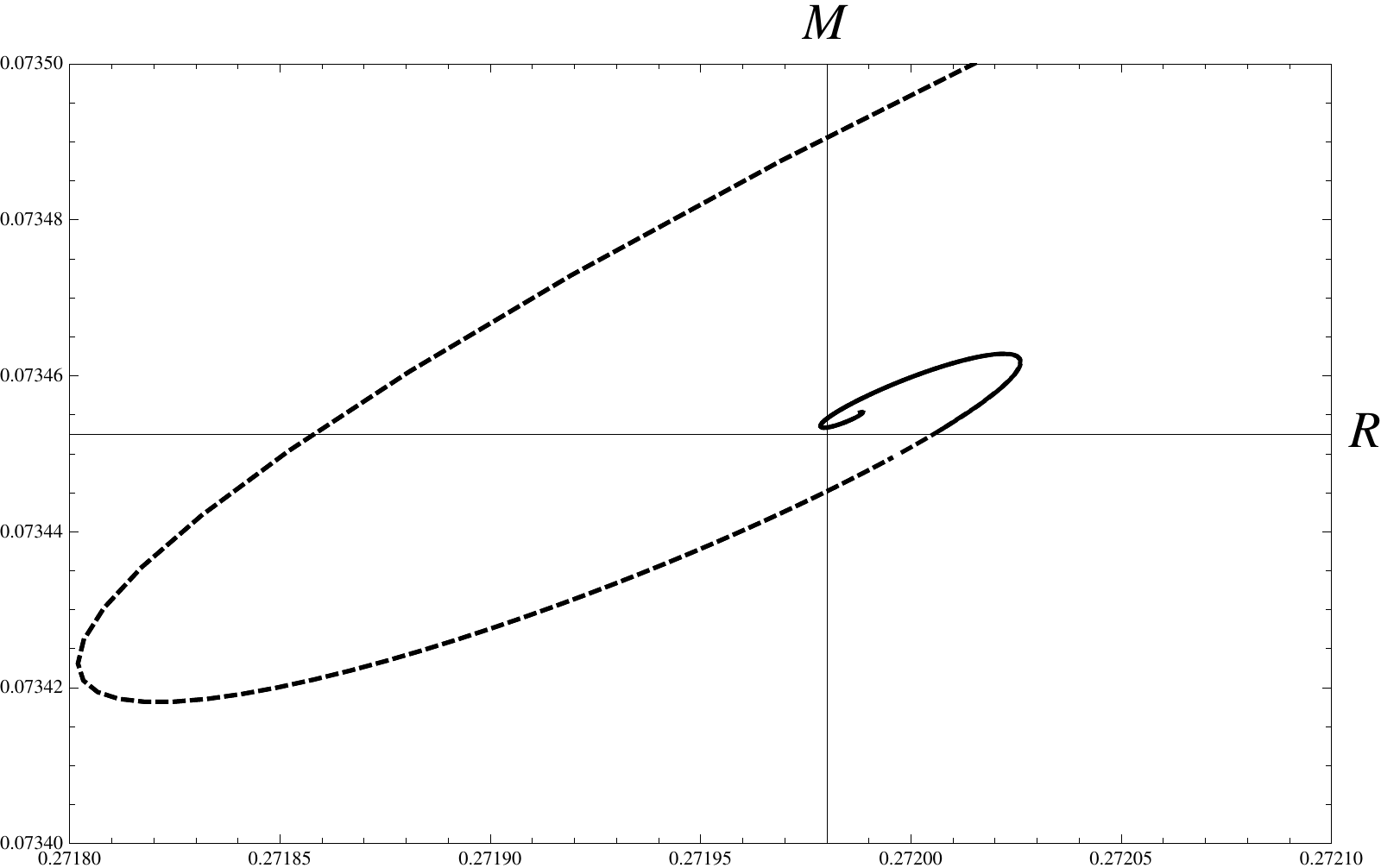}}
\caption{Mass-radium diagram for Vlasov matter (left) and elastic matter (right). The picture for elastic matter is a magnification of the spiral $k=0.05$ shown in Figure~\ref{RM}.}\label{spiralsfig}
\end{figure}

Finally we comment on the difference between the rigorous results on static solutions established for the two matter models. As opposed to elastic matter the existence theory for spherically symmetric static solutions of the Einstein-Vlasov system is well developed. Existence of solutions with compact support and finite mass have been obtained, and a sub-class of the solutions admit the quantity $2M/R$ to be arbitrarily close to the maximum value $8/9$. Hence, these solutions are in particular not small data solutions. We refer to~\cite{H} for a review.

\appendix
\section*{Appendix}
This Appendix contains a quick introduction to elasticity theory in general relativity. We restrict ourselves to derive the system~\eqref{h1h2m} and the constitutive equations~\eqref{consteq} used in this paper. For more background on the subject we refer to~\cite{BS, CQ, KS, KM}.

The {\it reference state} of the elastic body is a three dimensional Riemannian manifold $(N,\gamma)$ which identifies the body in the undeformed state (no strain).
The deformed state of the body in a four dimensional space-time $(M,g)$ is described by the
\textit{configuration map}
$\psi:M\to N$. It is assumed that for all $q\in N$ the set $\psi^{-1}(q)$ is a timelike curve (the world-line of the ``particle" $q$). This definition implies that the kernel of the \textit{deformation gradient}
$d\psi: TM \rightarrow TN$ is generated by a (future-directed unit) timelike
vector field $u$, which is the four-velocity field of the body particles.

Given a system of coordinates $x^\alpha$  in space-time ($\alpha=0,\dots, 3$) and a system of coordinates
$X^A$ in the reference state ($A=1,2,3$), we denote by $h_{\alpha\beta}$ the components of the pull-back of the material metric $\gamma$:
\begin{equation}\label{hdef}
h_{\alpha\beta}=\partial_\alpha\psi^A\partial_\beta\psi^B\,\gamma_{AB},
\end{equation}
which is also called the {\it relativistic strain tensor}. Clearly $h_{\alpha\beta}u^\beta=0$,  hence $h^\alpha_{\ \beta}=g^{\alpha\sigma}h_{\sigma\beta}$ has three positive
eigenvalues $h_1$, $h_2$, $h_3$, which are called the {\it principal stretches} of the body.
 The scalar quantity
\[
n=\sqrt{h_1h_2h_3}
\]
is the \textit{particle density} of the material. A {\it constitutive equation} for the elastic material is a function $w: (0,\infty)^3\to [0,\infty)$ such that the Lagrangian density $\mathcal{L}$ of the matter takes the form
\begin{equation}\label{lagrangian}
\mathcal{L}=w(h_1,h_2,h_3).
\end{equation}
A constitutive equation is said to have a  {\it quasi-Hookean} form if
\begin{subequations}\label{john}
\begin{equation}\label{stored}
w(h_1,h_2,h_3)=\hat{\rho}_0(n)+\hat{\mu}(n) F(\sigma):=\hat{w}(n,\sigma),
\end{equation}
where $\sigma$ is a {\it shear scalar}, i.e., a  non-negative function of the principal stretches such that $\sigma=0$ iff $h_1=h_2=h_3$, $F$ is a non-negative smooth function such that $F(0)=0$, $\hat{\rho}_0$ is the {\it unsheared energy density} and $\hat{\mu}(n)$ the {\it rigidity modulus} of the elastic material. In this paper we adopt the shear scalar used in~\cite{T}, namely
\begin{equation}\label{shear}
\sigma=\frac{h_1+h_2+h_3}{n^{2/3}}-3,
\end{equation}
and moreover we choose
\begin{equation}
F(\sigma)=\sigma.
\end{equation}
\end{subequations}
Following~\cite{T}, we call these elastic materials {\it John materials}.

The stress-energy tensor of elastic materials is obtained as the variation with respect to the
spacetime metric of the matter action $S_M=-\int\sqrt{|g |}\,\mathcal{L}$. For the constitutive equation~\eqref{lagrangian}-\eqref{john} we obtain
 \begin{equation}\label{moregeneralT}
T_{\alpha\beta}=\rho\, u_\alpha u_\beta+p\,(g_{\alpha\beta}+u_\alpha u_\beta)+\frac{2\mu}{n^{2/3}}\left(h_{\alpha\beta}-\frac{\sum_{i=1}^3h_i}{3}(g_{\alpha\beta}+u_\alpha u_\beta)\right),
\end{equation}
where $\mu(x^\alpha)=\hat{\mu}(n(x^\alpha))$,  $\rho(x^\alpha)=\hat{w}(n(x^\alpha),\sigma(x^\alpha))$, $p(x^\alpha)=\hat{p}(n(x^\alpha),\sigma(x^\alpha))$ and
\[
\hat{p}(n,\sigma )=\hat{p}_0(n)+\hat{\nu}(n)\sigma,\quad \hat{p}_0(n)=n^2\frac{d}{dn}\left(\frac{\hat{\rho}_0(n)}{n}\right),\quad \hat{\nu}(n)=n^2\frac{d}{dn}\left(\frac{\hat{\mu}(n)}{n}\right).
\]
Now we assume
\begin{equation}\label{unsheared}
\hat{\mu}(n)=k\hat{\rho}_0(n),
\end{equation}
for some constant $k\geq 0$, called the {\it rigidity parameter} of the elastic body.
This implies that
\begin{equation}\label{isotropicpressure}
\hat{w}(n,\sigma)=\hat{\rho}_0(n)(1+k \sigma).
\end{equation}
Moreover, denoting by $u,v_{(1)}, v_{(2)}, v_{(3)}$ the orthonormal eigenvectors of $h^\alpha_{\ \beta}$ corresponding to the eigenvalues $(0,h_1,h_2,h_3)$ (by definition, $u$ is the eigenvector of the zero eigenvalue), we have
\[
T^\alpha_{\ \beta}u^\beta=-\rho u^\alpha,\qquad T^\alpha_{\ \beta}v_{(i)}^\beta=p_iv_{(i)}^\alpha,
\]
where the anisotropic pressures are $p_i(x^\alpha)=\hat{p}_i(h_1(x^\alpha),h_2(x^\alpha),h_3(x^\alpha))$,
\begin{equation}\label{anipress}
\hat{p}_i(h_1,h_2,h_3)=\hat{p}_0(n)(1+k\sigma)+2k\frac{\hat{\rho}_0(n)}{n^{2/3}}\left(h_{i}-\frac{1}{3}\sum_{i=1}^3h_i\right).
\end{equation}

{\bf Static spherically symmetric elastic bodies}. Assume now that spacetime $(M,g)$ is static and spherically symmetric. This entails that the body is also static and spherically symmetric, which means that the static Killing vector field coincides with the four-velocity $u$ and that $h$ is invariant by the rotational group of isometries, see~\cite{BCV}. It is clear that in this case there are only two independent eigenvalues of the relativistic strain tensor: the {\it radial stretch} $x=h_1$ and the {\it tangential stretch} $y=h_2=h_3$. Moreover $x$ and $y$ are functions of the radial variable only; we fix a system $(t,r,\theta,\phi)$ of Schwarzschild coordinates in spacetime and express the metric in the form
\[
g=-e^{2\beta(r)}\,dt^2+e^{2\lambda(r)}\,dr^2+r^2(d\theta^2+\sin^2\theta d\phi^2).
\]
It follows that
\[
n=\sqrt{x}y,\quad \sigma=\frac{x+2y}{n^{2/3}}-3
\]
and so, by~\eqref{isotropicpressure},~\eqref{anipress} the matter fields constitutive equations take the form
\begin{align*}
&\hat{w}(n,\sigma)=\hat{\rho}(x,y), \quad \hat{p}_1(h_1,h_2,h_3)=\hat{p}_\mathrm{rad}(x,y),\\
 &\hat{p}_2(h_1,h_2,h_3)=\hat{p}_3(h_1,h_2,h_3)=\hat{p}_\mathrm{tan}(x,y)
\end{align*}
where $\hat{\rho}$, $p_\mathrm{rad}$, $\hat{p}_\mathrm{tan}$ are given by~\eqref{consteq}.
Finally we derive the system~\eqref{h1h2m}. First we have
\[
p_\mathrm{rad}'=\partial_x\hat{p}_\mathrm{rad}(x,y)x'+\partial_y\hat{p}_\mathrm{rad}(x,y)y',
\]
hence~\eqref{eqh1} follows from the TOV equation~\eqref{TOVeq}. Equation~\eqref{meq} follows by the definition of Hawking mass. It remains to establish~\eqref{eqh2}.  The latter equation depends on the choice of the material metric $\gamma$  in the reference state. We choose $\gamma$ to be the Euclidean metric, which we express in spherical coordinates $(R,\Theta,\Phi)$ as
\[
\gamma=dR^2+R^2(d\Theta^2+\sin^2\Theta d\Phi^2).
\]
In these coordinates the (spherically symmetric) configuration map becomes a function $(R(r),\Theta(\theta,\phi),\Phi(\theta,\phi))$. By rotating the axes we can assume $\Theta=\theta$, $\Phi=\phi$, hence
\[
h=(R'(r))^2dr^2+R^2(r)(d\theta^2+\sin^2\theta d\phi^2).
\]
It follows that
\[
h_1=x=e^{-2\lambda}(R')^2,\quad y=\frac{R^2}{r^2}.
\]
Differentiating $y(r)$ and using that $e^{-2\lambda}=1-2m/r$ yields~\eqref{eqh2}.

\end{document}